\documentclass[a4paper,11pt]{article}

\usepackage{lmodern}
\usepackage[T1]{fontenc}
\usepackage{microtype}
\usepackage[margin=2.5cm]{geometry}

\usepackage{amsmath}
\usepackage{amsfonts}
\usepackage{amssymb}
\usepackage{amsthm}

\usepackage[colorlinks=true,allcolors=blue]{hyperref}

\theoremstyle{plain}
\newtheorem{proposition}{Proposition}
\theoremstyle{remark}
\newtheorem*{comment}{Comment}
\newtheorem*{hypothesis}{Hypothesis}

\newcommand{\SMs}[1]{appendix~\ref{app:#1}}
\newcommand{\SM}[1]{\SMs{#1}}
\newcommand{\maeq}[1]{\eqref{#1}}
\newcommand{\maprop}[1]{Proposition~\ref{#1}}

\title{Symmetry-assisted computation of the magnetic field\\
at the site of a point dipole}
\author{D.~Pescia\\
\small Department of Physics, ETH Zurich, 8093 Zurich, Switzerland\\
\small \texttt{pescia@solid.phys.ethz.ch}\\
\small ORCID: 0000-0001-7436-418X}
\date{}

\begin{document}
\maketitle

\begin{abstract}
\noindent
In this paper, we revisit this concept of the magnetic field at the site of a magnetic dipole using symmetry arguments that are mathematically rigorous yet intuitive enough to be well-suited for upper-level undergraduate courses in electrodynamics and quantum mechanics. Furthermore, we present symmetry-based results that go beyond those discussed in the standard literature, which are directly relevant to fundamental problems in two-dimensional physics.

\medskip
\noindent
\textit{Keywords}: Foundations of magnetism, magnetostatics, magnetic point dipole, applications of Schur's lemma, generalized functions, two-dimensional ferromagnetism
\end{abstract}

\section{Introduction}
The physical origin of magnetism in matter was a subject of long-standing debate in the nineteenth century \cite{Maxwell}. Poisson and later Maxwell favored atomic-sized magnetic charges, always coupled as dipoles, as the fundamental building blocks of magnetic matter \cite{Maxwell}. Amp\`ere, conversely, postulated that magnetism originates from microscopic, atomic-sized closed current loops \cite{Maxwell}. Distinguishing between these two hypotheses is a subtle problem, as highlighted in a seminal 1977 paper by J.~D.~Jackson \cite{Jackson_40}. Ultimately, the key element for distinguishing between the two models is the exact behavior of the magnetic field exerted by a point magnetic dipole at its own site.

This on-site field determines the Fermi contact potential \cite{Fermi} and, consequently, one of the best-measured quantities of relativistic quantum mechanics: the isotropic hyperfine splitting of the $s$-levels in the hydrogen atom \cite{Jackson_40, GriffithsAJP, Jackson}. Given its fundamental importance, the determination of this singular field has long attracted scholarly attention. It is discussed extensively in the classic textbooks of J.~D.~Jackson \cite{Jackson} and D.~J.~Griffiths \cite{Griffiths}, as well as in dedicated pedagogical articles by D.~J.~Griffiths \cite{GriffithsAJP}, C.~P.~Frahm \cite{Frahm}, and Y.~Mu\~niz et al. \cite{Muniz}.

Computing and conceptualizing the magnetic field at the site of a dipole often pushes undergraduate (and even graduate) students to the limits of their mathematical comfort zone. This difficulty arises because the vector functions describing the dipole field exhibit a structural singularity at the origin. Properly managing this singularity precludes the use of standard functions as defined in ordinary calculus, requiring instead the framework of distribution theory or ``generalized functions'' \cite{NIST, Light, Halperin}.

In this paper, we first translate the generalized functions approach into the familiar language of ordinary quantum-mechanical operators. We then demonstrate that a certain matrix central to the problem commutes with all matrices of the three-dimensional rotation group $\mathrm{SO}(3)$. By applying symmetry arguments well-established in quantum mechanics \cite{Wigner, DP}---specifically Schur's Lemma (see, e.g., \cite{DP}, p.~58)---we provide a symmetry-assisted derivation of the on-site magnetic field. Finally, we discuss how the on-site magnetic field becomes anisotropic when the full rotational symmetry is broken down to an axial symmetry. This symmetry-breaking aspect has largely been overlooked in standard textbooks but is highly relevant to contemporary problems, such as two-dimensional hydrogen-like atoms trapped in semiconductor quantum wells \cite{Yang, Xie, Bastard, PRL, Plat}, hyperfine interactions in confined geometries~\cite{Coi, Mer} or two-dimensional ferromagnetism \cite{PIP}.

The bulk of the paper is devoted to presenting the original results, their physical meaning, and their pedagogical relevance. Those results that are mathematically rigorous are stated in the main text in the form of propositions. While these propositions are mathematically rigorous, they are framed with ``pedagogical rigor'' in mind. This ensures that advanced undergraduate and graduate students can look past the abstract formalism and learn how to apply these tools themselves to other problems in physics. The more conceptual proofs are included in the bulk of the paper. More technical proofs are collected in the appendices, one to each. \SM{schur}, giving general instructions about Schur's Lemma and the spectral theorem, comes first, because the division of labour between those two theorems is part of what the paper sets out to teach. We note that while modern computational tools and AI can be utilized to execute standard algebraic tasks, the underlying physical principles and rigorous conceptual thinking must ultimately be mastered through one's own analysis.

\section{Symmetry-assisted computation of the isotropic Fermi-contact magnetic field}

\subsection{Statement of the problem}
The derivation of the magnetic field of a point dipole (see \SM{vecpot}) is based on the expression for the vector potential $\vec{A}(\vec{r})$ at coordinate $\vec{r}$ generated by a magnetic dipole $\vec{\mu}$ located at the origin:
\begin{equation}
\vec{A}(\vec{r}) = \frac{\mu_0}{4\pi} \vec{\nabla} \times \frac{\vec{\mu}}{|\vec{r}|}
\label{Vector}
\end{equation}
In this paper, vectors are defined with respect to a Cartesian coordinate system with coordinates $(x_1, x_2, x_3)$. The $\vec{\nabla}$-operator is given by $(\partial_1, \partial_2, \partial_3)$. In \SMs{vecpot}, we derive this expression explicitly by considering a current circulating around a small closed loop following Amp\`ere's hypothesis, where the magnetic moment $\vec{\mu}$ is defined as the product of the current and the vector area of the loop.

The magnetic field $\vec{B}(\vec{r})$ is subsequently obtained via $\vec{B} = \vec{\nabla} \times \vec{A}$. Applying the standard vector identity $\vec{\nabla} \times (\vec{\nabla} \times \vec{X}) = \vec{\nabla} (\vec{\nabla} \cdot \vec{X}) - \nabla^2 \vec{X}$, the components $B_i$ ($i=1,2,3$) can be written as:
\begin{equation}
B_i(\vec{r}) = \frac{\mu_0}{4\pi}\sum_j \mu_j\partial_i \partial_j \frac{1}{| \vec{r} |}  - \frac{\mu_0}{4\pi}\,\mu_i \nabla^2 \frac{1}{| \vec{r} |}
\label{B}
\end{equation}
The second derivatives $\partial_i \partial_j |\vec{r}|^{-1}$ and the Laplacian $\nabla^2 |\vec{r}|^{-1}$ are highly singular at the origin where the dipole resides (see \SM{singular} for a brief discussion). To properly evaluate this singularity, these functions must be treated as the kernel of a test integral that defines a functional (called a ``distribution'' or ``generalized function'') \cite{NIST}:
\begin{equation}
\int \mathrm{d}V f(\vec{r}) B_i(\vec{r}) = \frac{\mu_0}{4\pi} \sum_j \mu_j \left[ \int \mathrm{d}V f(\vec{r}) \partial_i \partial_j \frac{1}{|\vec{r}|} \right] - \frac{\mu_0}{4\pi} \mu_i \int \mathrm{d}V f(\vec{r}) \nabla^2 \frac{1}{|\vec{r}|}
\label{Testintegral}
\end{equation}
In this formulation, $f(\vec{r})$ represents a ``test function'' \cite{NIST}, i.e., a smooth, well-behaved function that ensures the convergence of the integral by decaying rapidly at infinity. The $s$-wave function of the hydrogen atom serves as a physical example of such a test function, and the radial matrix element in the hyperfine splitting calculation \cite{GriffithsAJP} is precisely a test integral of this type. The classic isotropic hyperfine splitting is typically extracted by evaluating equation \eqref{Testintegral} using a spherically symmetric test function.

In equation \eqref{Testintegral}, we single out the test-integral matrices $F$ and $G$, defined by their matrix elements:
\begin{equation}
F_{ij} \equiv \int \mathrm{d}V f(|\vec{r}|) \partial_i \partial_j \frac{1}{|\vec{r}|}, \quad G_{ij} \equiv \delta_{ij} \int \mathrm{d}V f(|\vec{r}|) \nabla^2 \frac{1}{|\vec{r}|}
\label{FG_matrices}
\end{equation}
In the language of quantum mechanics, these test integrals---namely, the matrix elements of $F$ and $G$---can be viewed as scalar products involving the differential operators $\partial_i \partial_j$, an ``initial'' state $|\vec{r}|^{-1}$, and a ``final'' state $f(|\vec{r}|)$.

\subsection{Symmetry-based results}
Let us now analyze this matrix framework using the representation theory of groups \cite{Wigner, DP}.

\begin{proposition}
\label{prop:commute}
The matrix $F$ (and trivially the diagonal matrix $G$) commutes with any $3\times 3$ rotation matrix $R \in \mathrm{SO}(3)$ that transforms the coordinate system $(x_1, x_2, x_3)$, i.e.,
\begin{equation}
FR = RF.
\end{equation}
\end{proposition}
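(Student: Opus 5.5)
The plan is to show the equivalent statement $R F R^{T} = F$ for every $R\in\mathrm{SO}(3)$. Since $R^{T}=R^{-1}$, this gives $FR = RF$. The key inputs are that the test function $f(|\vec r|)$ and the ``initial state'' $|\vec r|^{-1}$ are both radial, so the only direction-dependent objects in $F_{ij}$ are the derivatives $\partial_i\partial_j$. These transform as a rank-two tensor under a change of coordinates.

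First I will make the definition of $F_{ij}$ honest as a distributional test integral. By two integrations by parts (boundary terms vanish because $f$ decays rapidly and $|\vec r|^{-1}$ is locally integrable),
\begin{equation*}
F_{ij} = \int \mathrm{d}V\, \frac{1}{|\vec r|}\,\partial_i\partial_j f(|\vec r|),
\end{equation*}
which is an absolutely convergent ordinary integral. Working with this form avoids any ambiguity at the origin, and it is the step where care is needed. The rest is a change of variables in a convergent integral.

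Next I substitute $\vec r = R^{T}\vec r\,'$, i.e.\ $x_k = \sum_l R_{lk} x'_l$, with $R$ a rotation. The Jacobian is $|\det R| = 1$, and $|\vec r| = |\vec r\,'|$, so both $f$ and $|\vec r|^{-1}$ keep their form. The chain rule gives $\partial_i = \sum_k R_{ki}\,\partial'_k$. Hence
\begin{equation*}
F_{ij} = \sum_{k,l} R_{ki}R_{lj}\int \mathrm{d}V'\,\frac{1}{|\vec r\,'|}\,\partial'_k\partial'_l f(|\vec r\,'|) = \sum_{k,l} R_{ki}\,F_{kl}\,R_{lj},
\end{equation*}
because the remaining integral is the same functional written in primed dummy variables. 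In matrix form this reads $F = R^{T} F R$. Multiplying on the left by $R$ gives $RF = FR$.

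For $G$ the statement is immediate, since it is a multiple of the identity. The only genuine obstacle is the first step: justifying that the singular kernel $\partial_i\partial_j|\vec r|^{-1}$ is understood through the test integral, so that the change of variables is legitimate. Transferring the derivatives onto $f$ settles this. An alternative would be to excise a small ball $|\vec r|<\epsilon$, which is itself rotation invariant, and let $\epsilon\to 0$. Either way the rotation acts only on the tensor indices.
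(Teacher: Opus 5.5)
Your proof is correct and is essentially the paper's argument in Appendix~\ref{app:commute}: the chain rule makes $\partial_i\partial_j$ transform as a rank-two tensor, the radial functions and the volume element are invariant, and one reads off $F=RFR^{T}$ (your $F=R^{T}FR$ is the same identity with $R\to R^{T}$); the only difference is that you first move the derivatives onto $f$, so the substitution happens in an absolutely convergent integral, which is a bit of care the appendix skips. One caveat on your parenthetical: the transfer should be taken as the definition of the distributional derivative, as the paper does in the proof of Proposition~\ref{prop:F33}, not derived from the pointwise kernel with vanishing boundary terms---with a small ball excised, the surface term produced when $\partial_i$ is moved from $\partial_i\partial_j|\vec r|^{-1}$ onto $f$ tends to $\frac{4\pi}{3}f(0)\delta_{ij}$, so the ball-excised integral of the pointwise kernel (zero here) is a different matrix from $F$, although both commute with every $R$ and your argument is unaffected.
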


\begin{proof}
See \SM{commute}.
\end{proof}

\noindent To understand why this algebraic property is central to resolving the on-site field problem, we state the following propositions.

\begin{proposition}
\label{prop:isotropy}
By virtue of $FR = RF$ for all $R \in \mathrm{SO}(3)$, rotating the magnetic moment $\vec{\mu}$ rotates the on-site magnetic field vector by the exact same angle and direction, directly demonstrating the isotropy of the Fermi contact magnetic field.
\end{proposition}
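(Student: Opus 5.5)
We define the on-site field as the linear map from the moment to the test-integral vector. Set $\vec{b} \equiv \int \mathrm{d}V f(|\vec{r}|)\,\vec{B}(\vec{r})$. Reading off \maeq{Testintegral} with the matrices of \maeq{FG_matrices}, this is
\begin{equation*}
\vec{b} = \frac{\mu_0}{4\pi}\,(F - G)\,\vec{\mu} \equiv M\vec{\mu}.
\end{equation*}
The statement to prove is that under $\vec{\mu}\mapsto R\vec{\mu}$ with $R\in\mathrm{SO}(3)$, the on-site field transforms as $\vec{b}\mapsto R\vec{b}$. Both $f$ and the kernel $|\vec{r}|^{-1}$ are held fixed, since they are spherically symmetric.

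The first step is to establish that $M$ commutes with every rotation. \maprop{prop:commute} gives $FR=RF$. The matrix $G$ is a scalar multiple of the identity, so it commutes trivially. Hence $MR=RM$ for all $R\in\mathrm{SO}(3)$.

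The second step is the one-line computation
\begin{equation*}
\vec{b}(R\vec{\mu}) = M R\vec{\mu} = R M\vec{\mu} = R\,\vec{b}(\vec{\mu}).
\end{equation*}
This says that rotating the moment by $R$ rotates the on-site field by the same $R$, that is, by the same angle about the same axis.

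To make the isotropy claim explicit, I would then invoke Schur's Lemma (\SM{schur}). The defining representation of $\mathrm{SO}(3)$ on $\mathbb{R}^3$ is irreducible, and it stays irreducible after complexification, since no line or plane is invariant under all rotations. So any real matrix commuting with all $R$ is a multiple of the identity. This gives $F=c\,\mathbb{1}$ and hence $M=\lambda\,\mathbb{1}$. Therefore $\vec{b}=\lambda\vec{\mu}$ is parallel to $\vec{\mu}$ with a direction-independent magnitude, which is exactly the isotropic Fermi-contact form. As a by-product, taking the trace gives $3c=\operatorname{tr}F=\int f\,\nabla^2|\vec{r}|^{-1}\,\mathrm{d}V$. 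This fixes $F=\tfrac13 G$, so that $M=-\tfrac{2}{3}\tfrac{\mu_0}{4\pi}G$, which I would note without computing further.

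The main obstacle I expect is not the algebra but making sure the rotation acts consistently. The transformation of the moment and the change of coordinates must be the same $R$. Also, because $f$ is radial, rotating $\vec{\mu}$ against a fixed test function is equivalent to rotating the coordinates, and this is what allows \maprop{prop:commute} to be applied. I would address this by writing $\vec{b}$ entirely through the fixed matrix $M$, so that only $\vec{\mu}$ moves. A secondary point is the irreducibility needed for Schur's Lemma over $\mathbb{R}$. I would handle it directly: an eigenvector of $F$ (which exists because $F$ is real symmetric) has an eigenspace that is $R$-invariant, so that eigenspace must be all of $\mathbb{R}^3$.
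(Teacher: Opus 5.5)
Your proposal is correct and takes essentially the paper's approach: you write the on-site field as $M\vec{\mu}$ with $M=\frac{\mu_0}{4\pi}(F-G)$, then use $FR=RF$ and the scalar form of $G$ to get $MR\vec{\mu}=RM\vec{\mu}$. The Schur and trace material goes beyond what this proposition needs, since it anticipates Propositions~\ref{prop:scalar} and \ref{prop:F33}; still, your observation $\operatorname{tr}F=G_{11}$, which gives $F=\tfrac13 G$, is a valid shortcut the paper does not use, because it reduces Proposition~\ref{prop:F33} to the standard identity $\nabla^2|\vec{r}|^{-1}=-4\pi\delta(\vec{r})$.
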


\begin{proof}
Consider the on-site field expressed as $\frac{\mu_0}{4\pi}(F-G)\vec{\mu}$, with $F$ and $G$ as defined in equation \eqref{FG_matrices}. If we rotate the dipole moment such that $\vec{\mu} \rightarrow R\vec{\mu}$, the resulting on-site field becomes $\frac{\mu_0}{4\pi}(F-G)(R\vec{\mu})$. Because $F$ and $G$ commute with $R$, we can rearrange the terms as:
\begin{equation}
\frac{\mu_0}{4\pi}(F-G)(R\vec{\mu}) = R \left[ \frac{\mu_0}{4\pi}(F-G)\vec{\mu} \right].
\end{equation}
The right-hand side is precisely the original on-site field rotated by the matrix $R$.
\end{proof}

\noindent The commutativity of $F$ with $R$ carries an even deeper consequence.

\begin{proposition}
\label{prop:scalar}
The matrix $F$ is a scalar multiple of the identity matrix: it is diagonal, with identical real entries along its main diagonal, just as $G$ is by definition.
\end{proposition}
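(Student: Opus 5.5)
The plan is to combine \maprop{prop:commute} with Schur's Lemma, and to use the symmetry of $F$ to dispose of the complex-eigenvalue subtlety. First I will note that $F$ is a real symmetric matrix: the test function $f$ is real, and the distributional second derivatives commute, $\partial_i\partial_j=\partial_j\partial_i$ (in the test-integral sense, after integrating by parts twice onto the smooth $f$, so that $F_{ij}=\int \mathrm{d}V\,|\vec r|^{-1}\partial_i\partial_j f$). This gives $F_{ij}=F_{ji}\in\mathbb{R}$. By the spectral theorem, $F$ has real eigenvalues and an orthonormal eigenbasis of $\mathbb{R}^3$.

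Second, let $\lambda$ be an eigenvalue of $F$ with eigenspace $E_\lambda\subseteq\mathbb{R}^3$. By \maprop{prop:commute}, for every $R\in\mathrm{SO}(3)$ and $v\in E_\lambda$ we have $F(Rv)=RFv=\lambda Rv$. Hence $E_\lambda$ is a nonzero subspace invariant under all of $\mathrm{SO}(3)$. The defining representation of $\mathrm{SO}(3)$ on $\mathbb{R}^3$ is irreducible: a rotation group that maps any nonzero vector to every direction on the sphere cannot preserve a line, and it cannot preserve a plane either, since then it would preserve the normal line. Therefore $E_\lambda=\mathbb{R}^3$, and so $F=\lambda\,\mathbb{1}$ with $\lambda$ real. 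This is Schur's Lemma in the form given in \SM{schur}. The spectral theorem supplies the real eigenvalue, and Schur's argument forces the eigenspace to be everything.

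As a byproduct, I will identify $\lambda$ by taking the trace: $3\lambda=\sum_i F_{ii}=\int \mathrm{d}V\, f\,\nabla^2|\vec r|^{-1}=G_{11}$, so $F=\tfrac13 G$. The on-site field is then $-\tfrac{2}{3}\tfrac{\mu_0}{4\pi}G_{11}\vec\mu$, which, using $\nabla^2|\vec r|^{-1}=-4\pi\delta$, gives the familiar $\tfrac{2\mu_0}{3}\vec\mu f(0)$.

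The main obstacle is justifying symmetry and reality of $F$ rigorously given the singularity. I will handle it by defining $F_{ij}$ through the integrated-by-parts form above, where $|\vec r|^{-1}$ is locally integrable and the integral converges absolutely. If symmetry were unavailable, the fallback is the complex form of Schur's Lemma: take a complex eigenvalue and apply irreducibility of $\mathbb{C}^3$ under $\mathrm{SO}(3)$. Reality of $\lambda$ would then follow from the reality of the entries of $F=\lambda\mathbb{1}$.
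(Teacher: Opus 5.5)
Your proof is correct. It uses the same two ingredients as the paper, but with their roles exchanged.

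The paper works on $\mathbb{C}^3$: Schur's Lemma gives $F=\lambda I$ with $\lambda$ a priori complex, and the spectral theorem is called in afterwards to make $\lambda$ real. You let the spectral theorem act first, to produce a real eigenvalue, and then run Schur's eigenspace argument on $\mathbb{R}^3$.

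This buys you something. You need only irreducibility over $\mathbb{R}$, which is exactly what the line-and-plane argument of Appendix~\ref{app:schur} (and your own) establishes. The paper's route needs irreducibility over $\mathbb{C}^3$, which it asserts without proof. That property does not follow automatically, since the in-plane rotations of Section~\ref{sec:anisotropic} are irreducible over $\mathbb{R}$ but not over $\mathbb{C}$. The paper's ordering, in turn, keeps Schur's Lemma in the standard complex form used in quantum mechanics and in the circular-polarization treatment of Proposition~\ref{prop:axial}.

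Two sharpenings are worth noting. First, as your fallback remark observes, the spectral theorem is not actually needed in the paper's ordering: once $F=\lambda I$ holds and $F$ is real, $\lambda=F_{11}$ is real. Second, in your ordering symmetry is not needed either, since any real $3\times3$ matrix has a real eigenvalue (its characteristic polynomial is a real cubic). Symmetry becomes essential for the $2\times2$ block of Proposition~\ref{prop:axial}, where the commuting real matrices also include the antisymmetric quarter-turn, which only symmetry excludes.

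Finally, your trace identity $3\lambda=\operatorname{tr}F=G_{11}$ is correct and goes beyond the paper. Together with $G_{11}=-4\pi f(0)$ it yields Proposition~\ref{prop:F33} immediately. The brute-force evaluation of $F_{33}$ in Appendix~\ref{app:F33}, which the paper presents as unavoidable, is thereby reduced to the Laplacian test integral that is needed for $G$ anyway.
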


\begin{comment}
This proposition vastly simplifies the problem of computing the matrix $F$: instead of computing nine individual components, we only need to evaluate one single non-vanishing matrix element, which we take to be $F_{33}$.
\end{comment}

\begin{proof}
The proof proceeds in two steps, which are discussed at greater length in Appendix~\ref{app:schur}.

\textit{1. Schur's Lemma.} The $3\times 3$ rotation matrices in Euclidean space form the standard irreducible representation of the rotation group $\mathrm{SO}(3)$. Schur's Lemma dictates that any matrix commuting with all matrices of an irreducible representation must be a scalar multiple of the identity matrix $I$, so that $FR = RF$ for every $R \in \mathrm{SO}(3)$ gives
\begin{equation}
F = \lambda \, I.
\label{Schur_scalar}
\end{equation}
The lemma is a statement about representations on complex vector spaces, and it is on $\mathbb{C}^3$ that it is applied here; the scalar $\lambda$ it delivers is in general a complex number.

\textit{2. The spectral theorem.} By its definition \eqref{FG_matrices} the matrix $F$ is real and symmetric, since $\partial_i\partial_j = \partial_j\partial_i$. A real symmetric matrix is Hermitian, and the spectral theorem of linear algebra guarantees that the eigenvalues of a Hermitian matrix are real. By \eqref{Schur_scalar} the scalar $\lambda$ is the eigenvalue of $F$, and it is therefore a real number.

The matrix $F$ is thus diagonal, with three identical real entries.
\end{proof}

\noindent Proposition~\ref{prop:scalar} reduces the problem to a single number, but that number must be computed ``by brute force'': no symmetry argument is available for it. The following proposition deals with the computation of $F_{33}$.

\begin{proposition}
\label{prop:F33}
The diagonal matrix element $F_{33}$ satisfies
\begin{equation}
F_{33} = -\frac{4\pi}{3} f(0).
\label{Prop4_Eq}
\end{equation}
\end{proposition}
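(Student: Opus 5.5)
The cleanest route uses Proposition~\ref{prop:scalar} together with the trace. Since $F=\lambda I$, we have $F_{33}=\lambda=\frac{1}{3}\operatorname{tr}F=\frac{1}{3}\sum_i F_{ii}$. The trace is
\[
\sum_i F_{ii}=\int \mathrm{d}V\, f(|\vec r|)\,\nabla^2\frac{1}{|\vec r|},
\]
understood as a test integral, i.e.\ as a distribution. So the proof reduces to the classical identity $\int \mathrm{d}V\, f\,\nabla^2 |\vec r|^{-1} = -4\pi f(0)$, after which $F_{33}=-\frac{4\pi}{3}f(0)$ follows at once. Note that the same identity gives $G=-4\pi f(0)\, I$.

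To establish that identity, I will interpret the test integral in the distributional sense, moving both derivatives onto the test function:
\[
\int \mathrm{d}V\, f\,\nabla^2 \frac{1}{r} := \int \mathrm{d}V\, \frac{1}{r}\,\nabla^2 f .
\]
I then excise a small ball $B_\varepsilon$ around the origin. On the region $r>\varepsilon$, Green's second identity applies, and there $\nabla^2 (1/r)=0$. The volume term therefore vanishes, and what remains is the surface integral over the sphere $r=\varepsilon$:
\[
\oint \left( \frac{1}{r}\,\partial_n f - f\,\partial_n \frac{1}{r} \right) \mathrm{d}S .
\]
I orient the normal pointing into the ball, away from the integration region. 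With $\mathrm{d}S=\varepsilon^2\,\mathrm{d}\Omega$, the first term is $O(\varepsilon)$ because $\partial_n f$ is bounded. In the second term, $\partial_n (1/r)$ has magnitude $1/\varepsilon^2$, so it tends to $f(0)$ times the solid angle $4\pi$, carrying a minus sign. Finally, $\int_{B_\varepsilon}\frac{1}{r}\nabla^2 f\,\mathrm{d}V=O(\varepsilon^2)$, so letting $\varepsilon\to0$ gives $-4\pi f(0)$. The decay of $f$ at infinity kills the outer boundary term.

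For the spherically symmetric test functions $f(|\vec r|)$ used in the paper, the same result can be obtained more elementarily with a one-dimensional radial integration by parts. Writing $\nabla^2 f = r^{-2}(r^2 f')'$, we get
\[
\int_0^\infty 4\pi r^2\,\frac{1}{r}\,\frac{1}{r^2}\,(r^2 f')'\,\mathrm{d}r = 4\pi\int_0^\infty \frac{(r^2 f')'}{r}\,\mathrm{d}r ,
\]
and a further integration by parts gives $4\pi\left[r f'\right]_0^\infty + 4\pi\int_0^\infty f'\,\mathrm{d}r = -4\pi f(0)$. I would present this version as the main computation, since it is transparent for undergraduates.

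The main obstacle is justifying the trace step honestly. Proposition~\ref{prop:scalar} was derived under the assumption that the $F_{ij}$ are well-defined numbers. However, the off-diagonal and diagonal integrands $f\,\partial_i\partial_j(1/r)$ behave like $r^{-3}$ near the origin, so they are only conditionally convergent. The trace identity $\sum_i F_{ii} = \int f\,\nabla^2(1/r)$ therefore holds only if every $F_{ij}$ is defined consistently in the distributional sense, $F_{ij}=\int (1/r)\,\partial_i\partial_j f\,\mathrm{d}V$. That definition is linear, so the trace commutes with the integral. I will make this convention explicit at the outset: it is exactly the one implicit in \eqref{Testintegral} and the one under which the commutation in \maprop{prop:commute} holds. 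As an optional cross-check, one can instead compute $F_{33}$ directly as $\int (1/r)\,\partial_3^2 f\,\mathrm{d}V$ in spherical coordinates, which again gives $-\frac{4\pi}{3}f(0)$ after averaging $\cos^2\theta$ to $1/3$.
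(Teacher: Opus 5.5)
Your proof is correct, but it takes a different route from the paper's. The paper does not use Proposition~\ref{prop:scalar} at this point. After transferring $\partial_3^2$ onto the radial test function, it writes $\partial_3^2 f=\cos^2\theta\, f''+\sin^2\theta\, f'/r$ and evaluates the two radial integrals separately. This gives $+f(0)$ with angular weight $4\pi/3$ and $-f(0)$ with weight $8\pi/3$, which is essentially the ``optional cross-check'' you mention at the end.

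You instead use $F=\lambda I$ to write $F_{33}=\frac{1}{3}\operatorname{tr}F$. That reduces everything to the single identity $\int \mathrm{d}V\,|\vec r|^{-1}\nabla^2 f=-4\pi f(0)$. Both of your derivations of it are right: the Green's-identity version gets the sign from the normal pointing into the excised ball, and the radial version gives $4\pi[rf']_0^\infty+4\pi\int_0^\infty f'\,\mathrm{d}r=-4\pi f(0)$.

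The two routes are the same calculation seen globally and locally. For radial $f$, the angular mean of $\partial_3^2 f$ is $\frac{1}{3}(f''+2f'/r)=\frac{1}{3}\nabla^2 f$.

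Your route buys economy and structure. Only one singular integral has to be done in the whole isotropic problem, and it is the same one that fixes $G$. Hence $F=\frac{1}{3}G$, and the celebrated $2/3$ appears as $1-\frac{1}{3}$, with $\frac{1}{3}$ the inverse of the dimension. This partly qualifies the paper's remark that no symmetry argument is available for $F_{33}$. It is also exactly the isotropic case of the trace relation the paper itself uses for $F_\perp$ in \eqref{F_perp_def}. There, axial symmetry leaves two unknowns and $F_\parallel$ must be computed separately; full isotropy closes the system.

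The paper's route buys independence. It needs only the radial symmetry of $f$, not Schur's Lemma and the spectral theorem, so it checks Proposition~\ref{prop:scalar} rather than presupposing it.

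Your insistence on defining every $F_{ij}$ by moving both derivatives onto $f$ is essential to the trace step, not pedantry. Under a pointwise or principal-value reading, $\nabla^2|\vec r|^{-1}$ vanishes off the origin and each diagonal element averages to zero over the sphere. The trace step would then return the wrong answer $F=0$.
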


\begin{comment}
When this result is substituted back into equation \eqref{Testintegral}, the total test integral, including the matrix elements of $G$, yields the standard isotropic Fermi contact magnetic field. The arithmetic is worth displaying, because it is the cancellation between the two terms of equation \eqref{B} that fixes the celebrated factor $2/3$:
\begin{equation}
\frac{\mu_0}{4\pi}\Big[F-G\Big]=
\begin{pmatrix}
\frac{2\mu_0}{3}f(0)&0&0\\
0&\frac{2\mu_0}{3}f(0)&0\\
0&0&\frac{2\mu_0}{3}f(0)
\end{pmatrix}
\end{equation}
or, simply,
\begin{equation}
\vec{B}_{\text{on-site}}(\vec{r}) = \frac{2\mu_0}{3} \vec{\mu} \, \delta(\vec{r}).
\label{Fermi_contact}
\end{equation}
\end{comment}

\begin{proof}
For the proof, two elements are essential. The first pertains to the definition of $F_{33}$. Following the framework of distribution theory \cite{Halperin, NIST}, the distributional derivative of $F_{33}$,
\begin{equation}
F_{33} = \int \mathrm{d}V f(|\vec{r}|) \partial_3^2 \left(\frac{1}{|\vec{r}|}\right),
\end{equation}
is defined by transferring the derivatives onto the test function:
\begin{equation}
F_{33} = \int \mathrm{d}V \frac{1}{|\vec{r}|} \partial_3^2 f(|\vec{r}|).
\end{equation}
The second element pertains to the evaluation of the integral over the volume. Because the origin is a singular point, the volume integral must be handled by isolating the origin within a small sphere of radius $\epsilon$ and taking the limit $\epsilon \to 0$:
\begin{equation}
\int \mathrm{d}V \frac{1}{|\vec{r}|} \partial_3^2 f(|\vec{r}|) = \lim_{\epsilon\rightarrow 0} \int_{|\vec{r}|>\epsilon} \mathrm{d}V \frac{1}{|\vec{r}|} \partial_3^2 f(|\vec{r}|).
\label{F33}
\end{equation}
The details of the proof are given in \SM{F33}.
\end{proof}

\section{Symmetry-assisted computation of the anisotropic Fermi-contact magnetic field}
\label{sec:anisotropic}

The standard expressions for the on-site magnetic field, and with them the isotropy established in Proposition~\ref{prop:isotropy}, are derived assuming spherically symmetric test functions. While these functions are perfectly suited for three-dimensional problems like atomic hyperfine structures, significant physical scenarios exist where this spatial symmetry is broken. In the following, we consider physical problems where the test functions are expressed as a product $g(\rho)h(x_3)$, where $\rho = \sqrt{x_1^2 + x_2^2}$ is the in-plane radial coordinate. Here, $g(\rho)$ describes the in-plane distribution and $h(x_3)$ represents the test function defined along the coordinate $x_3$. A prime example occurs when one spatial degree of freedom (e.g., along the $x_3$-axis) is suppressed or strongly confined, giving rise to an effectively two-dimensional system, such as hydrogen-like atoms trapped inside narrow semiconductor quantum wells \cite{Xie, Bastard, PRL}. In these two-dimensional hydrogenic systems \cite{Yang}, the electron experiences a three-dimensional Coulomb potential, but its motion is constrained to a plane because the third spatial degree of freedom is energetically frozen. Any physically realistic test function for this geometry must reflect this directional anisotropy.

The matrix $F$ constructed over this class of product test functions no longer possesses full rotational symmetry; instead, it only commutes with a subgroup of the full rotation group, namely the axial group consisting of all rotations $R(\varphi)$ about the $x_3$-axis. Their matrix representation in the $(x_1, x_2, x_3)$ coordinate system reads:
\begin{equation}
R(\varphi) = \begin{pmatrix}
\cos \varphi & -\sin\varphi & 0 \\
\sin\varphi &  \cos\varphi & 0 \\
0 & 0 & 1
\end{pmatrix}.
\end{equation}
The two-dimensional matrix representation of the rotation in the $(x_1, x_2)$-plane (the top-left block in the matrix) is, over the complex numbers on which Schur's Lemma operates, reducible, so that the lemma does not directly apply. Nevertheless, the commutativity of the matrix $F$ with $R(\varphi)$ yields the following results: the two ingredients used for Proposition~\ref{prop:scalar}---Schur's Lemma and the spectral theorem, both discussed in Appendix~\ref{app:schur}---settle this case as well, once the reducibility is taken into account.

\begin{proposition}
\label{prop:axial}
The matrix $F$ is diagonal, with diagonal matrix elements satisfying $F_{11} = F_{22} \equiv F_{\parallel}$, which need no longer coincide with $F_{33} \equiv F_{\perp}$.
\end{proposition}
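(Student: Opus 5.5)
The plan is to follow the two-ingredient strategy of Proposition~\ref{prop:scalar}: first determine the structure forced by $FR(\varphi)=R(\varphi)F$ for all $\varphi$, then use the realness and symmetry of $F$ to fix the remaining freedom. The first task is to establish the commutation itself, in the same way as in \SM{commute}. For the product test function $g(\rho)h(x_3)$, we substitute $\vec r\to R(\varphi)\vec r$ in $F_{ij}=\int \mathrm{d}V\, g(\rho)h(x_3)\,\partial_i\partial_j |\vec r|^{-1}$. This substitution leaves $\rho$, $x_3$, $|\vec r|$ and the volume element invariant, and the chain rule gives $F_{ij}=\sum_{kl}R_{ki}R_{lj}F_{kl}$, i.e.\ $F=R^{T}FR$, which is equivalent to $RF=FR$. (If the test function is only required to be invariant under $R(\varphi)$, and not under reflections, nothing more is needed.)

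Next we decompose $\mathbb{C}^3=V_{12}\oplus V_3$, where $V_{12}=\mathrm{span}(e_1,e_2)$ and $V_3=\mathrm{span}(e_3)$. Over $\mathbb{C}$, the block on $V_{12}$ splits further into the two one-dimensional representations spanned by $e_\pm=(e_1\pm i e_2)/\sqrt2$, on which $R(\varphi)$ acts as $e^{\mp i\varphi}$. The representation on $V_3$ is the trivial one. These three characters $e^{-i\varphi}$, $e^{i\varphi}$ and $1$ are pairwise inequivalent. Schur's Lemma (the part stating that intertwiners between inequivalent irreducibles vanish, and intertwiners of an irreducible with itself are scalars) then says that in the basis $(e_+,e_-,e_3)$ the matrix $F$ is diagonal, $\mathrm{diag}(\lambda_+,\lambda_-,\lambda_3)$. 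Concretely, one can check this by hand: the commutation with $e^{i\varphi}$-eigenvectors forces $Fe_\pm\in\mathbb{C}e_\pm$ and $Fe_3\in\mathbb{C}e_3$.

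The main obstacle is this step: showing $\lambda_+=\lambda_-$. This is exactly where reducibility bites, because Schur alone would allow an antisymmetric in-plane part $\begin{pmatrix}a&-b\\ b&a\end{pmatrix}$. Translating back to the real basis gives $F_{11}=F_{22}=(\lambda_++\lambda_-)/2$, $F_{12}=-F_{21}=i(\lambda_+-\lambda_-)/2$ up to sign convention, $F_{13}=F_{23}=F_{31}=F_{32}=0$, and $F_{33}=\lambda_3$. The second ingredient removes the extra freedom. $F$ is real symmetric because $\partial_i\partial_j=\partial_j\partial_i$, so $F_{12}=F_{21}$. Combined with $F_{12}=-F_{21}$, this gives $F_{12}=0$, hence $\lambda_+=\lambda_-$. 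Equivalently, by the spectral theorem, $F$ is Hermitian, so the $\lambda$'s are real. Realness of $F$ then implies complex conjugation maps $e_+\leftrightarrow e_-$ and hence $\lambda_+=\overline{\lambda_-}=\lambda_-$. Thus $F=\mathrm{diag}(F_\parallel,F_\parallel,F_\perp)$ with real entries.

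Finally, to justify the phrase ``need no longer coincide'', I would indicate that no symmetry relates $e_3$ to $V_{12}$, since they carry inequivalent representations. So $F_\perp\neq F_\parallel$ is allowed in general. A concrete anisotropic choice of $g,h$ would show that it actually happens, and this would be deferred to the explicit computation analogous to \SM{F33}.
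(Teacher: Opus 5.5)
Your proof is correct and follows essentially the paper's route: commutation with $R(\varphi)$ forces the circular polarizations $e_\pm$ and $e_3$ to be eigenvectors of $F$, and the real-symmetric character of $F$ (equivalently, the spectral theorem) then forces $\lambda_+=\lambda_-$, i.e.\ $F_{12}=0$ and $F_{11}=F_{22}$. The differences are only in detail. You decouple $e_3$ from the plane through the inequivalence of the three characters, so $F_{i3}=F_{3i}=0$ comes from commutation alone, whereas the paper's fixed-vector argument yields the third column and needs symmetry for the third row; you conclude by playing $F_{12}=F_{21}$ against $F_{12}=-F_{21}$ rather than solving the eigenvalue equation for $\vec{v}_+$; and you verify the axial commutation relation, which the paper only asserts.
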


\begin{comment}
Symmetry alone permits $F_\parallel \neq F_\perp$; the explicit evaluation below shows that the two do differ, which removes the structural isotropy of the Fermi contact potential.
\end{comment}

\begin{proof}
We first show that the unit vector $\vec{e}_3$ pointing along the $x_3$-coordinate is an eigenvector of $F$. Indeed, because $R(\varphi) \vec{e}_3 = \vec{e}_3$, we have:
\begin{equation}
F R(\varphi) \vec{e}_3 = R(\varphi) F \vec{e}_3 = F \vec{e}_3.
\end{equation}
Accordingly, $F\vec{e}_3$ is a vector that remains invariant under axial rotation, which implies:
\begin{equation}
F \vec{e}_3 = F_{33} \vec{e}_3.
\end{equation}
This result ensures that the symmetric matrix $F$ possesses the same block structure as the rotation matrix $R(\varphi)$:
\begin{equation}
F = \begin{pmatrix}
a & b & 0 \\
b & c & 0 \\
0 & 0 & F_{33}
\end{pmatrix}.
\end{equation}
\medskip
The commutativity between the in-plane rotations and the upper-left $2 \times 2$ block of $F$---denoted as $F^\parallel$---imposes a strict condition on the structure of this block. Over the complex numbers, the representation
$R(\varphi)$ of the rotations about the $x_3$-axis is not irreducible: the two vectors
\begin{equation}
\vec{v}_\pm = \begin{pmatrix} 1 \\ \mp \mathrm{i} \end{pmatrix}
\qquad \text{satisfy} \qquad
R(\varphi)\, \vec{v}_\pm = \mathrm{e}^{\pm \mathrm{i}\varphi} \, \vec{v}_\pm.
\label{circular}
\end{equation}
The in-plane representation is therefore reducible and splits into the two one-dimensional representations $\mathrm{e}^{\pm\mathrm{i}\varphi}$. Schur's Lemma does not apply to it. Physically, $\vec{v}_\pm$ are the circular polarizations of the plane, and it is entirely reasonable that an axially symmetric problem should single them out.

\medskip
Nevertheless, group theory provides us with enough knowledge to find the structure of $F^{\parallel}$. In fact, \textit{i.} because of the commutativity of $R(\varphi)$ and $F^\parallel$, the complex vectors $\vec{v}_\pm$ of equation \eqref{circular} are also eigenvectors of $F^\parallel$! Furthermore, \textit{ii.} the spectral theorem requires the eigenvalues $\alpha^{\pm}$ of $F^\parallel$ to be real numbers. \textit{i.} and \textit{ii.} finally determine the structure of $F^\parallel$. In fact, the equation
\begin{equation}
\begin{pmatrix}
a & b\\
b & c
\end{pmatrix}
\begin{pmatrix} 1 \\ - \mathrm{i} \end{pmatrix}= \alpha^+ \begin{pmatrix} 1 \\ - \mathrm{i} \end{pmatrix}
\end{equation}
has the solution $b=0$ and $a=c$, i.e.\
\begin{equation}
F^\parallel = F_\parallel \, I.
\end{equation}

\end{proof}

\begin{comment}
The matrix elements follow from transferring the derivatives onto the test function. Since $\nabla^2_\rho = \partial_1^2 + \partial_2^2$, the in-plane Laplacian delivers the \textit{sum} of the two equal in-plane elements,
\begin{equation}
F_{11} + F_{22} = 2 F_\parallel = \int \mathrm{d}V \frac{1}{\sqrt{\rho^2 + x_3^2}} \, h(x_3) \, \nabla^2_\rho g(\rho),
\label{F_parallel_sum}
\end{equation}
so that, carrying out the trivial azimuthal integration of $\mathrm{d}V = \rho \,\mathrm{d}\rho \,\mathrm{d}\varphi \,\mathrm{d}x_3$,
\begin{equation}
F_\parallel = \pi \int \mathrm{d}x_3 \, h(x_3) \int \rho \,\mathrm{d}\rho \, \frac{1}{\sqrt{\rho^2 + x_3^2}} \, \nabla^2_\rho g(\rho).
\label{F_parallel_def}
\end{equation}
For $F_\perp$, we use the identity
\begin{equation}
\partial_3^2 \equiv \nabla^2 - \nabla^2_\rho
\end{equation}
to write, with the help of equation \eqref{F_parallel_sum},
\begin{equation}
F_\perp = \int \mathrm{d}V \frac{1}{\sqrt{\rho^2 + x_3^2}} \nabla^2 [g(\rho) h(x_3)] - 2 F_\parallel.
\label{F_perp_def}
\end{equation}
Expressing $F_\perp$ in this manner allows one to identify the test integral of the three-variable Laplacian of $|\vec{r}|^{-1}$ and $F_\parallel$ as the sought-for quantities.
\end{comment}

\begin{proposition}
\label{prop:laplacian2d}
For the product test functions $h(x_3) g(\rho)$ considered here, the test integral of the three-dimensional Laplacian of $|\vec{r}|^{-1}$ is
\begin{equation}
\int \nabla^2 \left( \frac{1}{\sqrt{\rho^2 + x_3^2}} \right) h(x_3) g(\rho) \,\mathrm{d}V = -4\pi h(0) g(0).
\label{Prop6_Eq}
\end{equation}
\end{proposition}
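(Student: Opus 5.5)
The plan is to treat the left-hand side of \maeq{Prop6_Eq} as a distributional integral, exactly as in the proof of \maprop{prop:F33}. We move both derivatives onto the test function $\phi(\vec r)\equiv h(x_3)g(\rho)$ and then excise a small ball $|\vec r|<\epsilon$ around the origin:
\begin{equation*}
\int \nabla^2\!\left(\frac{1}{|\vec r|}\right)\phi\,\mathrm{d}V=\lim_{\epsilon\to 0}\int_{|\vec r|>\epsilon}\frac{1}{|\vec r|}\,\nabla^2\phi\,\mathrm{d}V .
\end{equation*}
The key observation is that, unlike $\partial_i\partial_j$, the Laplacian is rotationally invariant. The result should therefore not care whether $\phi$ is spherically symmetric; only its value at the origin should enter. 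I will not use any symmetry of $\phi$ beyond smoothness and rapid decay.

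Next I apply Green's second identity on the region $|\vec r|>\epsilon$. Since $\nabla^2|\vec r|^{-1}=0$ there, the volume term becomes a pure surface term on the sphere $|\vec r|=\epsilon$:
\begin{equation*}
\int_{|\vec r|>\epsilon}\frac{1}{|\vec r|}\nabla^2\phi\,\mathrm{d}V=\oint_{|\vec r|=\epsilon}\Big[\phi\,\partial_r\frac{1}{r}-\frac{1}{r}\,\partial_r\phi\Big]\mathrm{d}S .
\end{equation*}
Here $\partial_r$ is the radial derivative, and the outward normal of the region points toward the origin. Decay at infinity kills the outer boundary.

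With the signs tracked carefully, the first term gives $-\epsilon^{-2}\oint\phi\,\mathrm{d}S\to-4\pi\phi(0)$, because $\oint\mathrm{d}S=4\pi\epsilon^2$ and $\phi$ is continuous. The second term is bounded by $\epsilon^{-1}\cdot 4\pi\epsilon^2\max|\nabla\phi|\to 0$. Finally, $\phi(0)=h(0)g(0)$, which gives $-4\pi h(0)g(0)$.

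The main obstacle is regularity at the origin. The function $g(\rho)$, viewed as a function of $(x_1,x_2)$, must be smooth, or at least $C^2$, so that $\nabla^2\phi$ is integrable against $1/r$ and $\nabla\phi$ stays bounded near $0$. This requires $g'(0)=0$, i.e. $g$ is even in $\rho$. I would state this as part of the standing assumptions on the test functions, as in the paper's notion of a smooth test function.

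A second point to check is the choice of excision region. For the anisotropic product function it might seem more natural to use a cylinder, but the limit is independent of shape. I will stick with the sphere because the fundamental solution $1/r$ is spherical, which makes the surface integral elementary.

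As a cross-check, I would verify the result directly in cylindrical coordinates. Writing $\nabla^2=\nabla^2_\rho+\partial_3^2$, I would integrate by parts in $x_3$ and in $\rho$ separately for a Gaussian choice $g=e^{-\rho^2}$, $h=e^{-x_3^2}$, to confirm the constant $-4\pi$.
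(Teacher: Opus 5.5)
Your proof is correct. It rests on the same mechanism as the paper's proof in \SM{prop6}: Green's second identity with $u=1/r$ and $v=h(x_3)g(\rho)$, harmonicity of $u$ away from the origin, a vanishing $u\,\partial_n v$ term, and a $v\,\partial_n u$ term that reproduces $-4\pi\phi(0)$.

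The genuine difference is the excised region. You remove a ball, on whose surface $\partial_r(1/r)=-\epsilon^{-2}$ is constant, so both boundary terms are one-line estimates. The paper deliberately removes a cylinder of height $\epsilon_3$ and radius $\epsilon_\rho$. This costs an explicit computation of the cap and mantle fluxes of $\vec{\nabla}(1/r)$, which individually depend on $\epsilon_3/\epsilon_\rho$ but always add up to the full solid angle $4\pi$. Exhibiting this aspect-ratio independence is what the paper advertises in the comment after the proposition.

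You only assert that ``the limit is independent of shape''. It is true, and the cheapest justification is absolute integrability of $|\vec r|^{-1}\nabla^2\phi$: $1/r$ is locally integrable in three dimensions, and $\nabla^2\phi$ is bounded and decays. By dominated convergence, the integral over the complement of any region shrinking to the origin tends to the same value, whether that region is a ball or a cylinder of arbitrary aspect ratio. With that sentence, your proof also delivers the paper's aspect-ratio claim without any flux computation.

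One small correction to your regularity remark. Boundedness of $\vec{\nabla}\phi$ and integrability of $|\vec r|^{-1}\nabla^2\phi$ do not actually require $g'(0)=0$. The offending term $h\,g'/\rho$ is integrable against $1/r$ once the factor $\rho$ in $\mathrm{d}V$ is included. What $g'(0)=0$ really secures is that $\phi$ is $C^1$ across the whole $x_3$-axis (and, for $g\in C^2$, even $C^2$). That lets you apply Green's identity on $|\vec r|>\epsilon$ without separately excising a thin tube around that axis. The paper covers this point simply by assuming smooth test functions.
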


\begin{comment}
Equation \eqref{Prop6_Eq} is the familiar identity $\nabla^2 |\vec{r}|^{-1} = -4\pi \delta(\vec{r})$. What the proof in \SM{prop6} shows is that the identity survives when the test functions are chosen with cylindrical instead of spherical symmetry and the origin is excluded, accordingly, using a small \textit{cylinder} rather than a small sphere. The test integral is found to be independent of the aspect ratio with which the origin is approached.
\end{comment}

\begin{proof}
See \SM{prop6}.
\end{proof}

\begin{comment}
These results might be relevant for determining the Fermi contact potential in two-dimensional hydrogen like systems. Note that existing literature on two-dimensional hydrogen atoms (e.g., \cite{Yang}) typically enforces two-dimensionality by eliminating the $x_3$ variable from the underlying differential equations entirely. In \SM{slab} we discuss a slab model. This model assumes a class of smooth, symmetric test functions satisfying $h(x_3) = h(-x_3)$, which are positive and possess a characteristic spatial scale $d$. The in-plane test functions $g(\rho)$ are smooth, positive, and possess a spatial scale $\Lambda$. The slab model keeps all three spatial coordinates active, allowing us to seamlessly apply standard theorems of vector calculus. Two-dimensionality is then enforced by imposing the scale separation $d/\Lambda \rightarrow 0$. The results of \SMs{slab} support the \textbf{conjecture} that, in this limit, $F_\parallel=0$, i.e.
\begin{equation}
\frac{\mu_0}{4\pi}\Big[F-G\Big]=
\begin{pmatrix}
\mu_0 h(0)g(0)&0&0\\
0&\mu_0 h(0)g(0)&0\\
0&0&0
\end{pmatrix}
\end{equation}
This indicates that the on-site magnetic field is strongly anisotropic. For instance, in a configuration where the magnetic dipole is aligned strictly along the $x_3$-axis, $\vec{B}_{\text{on-site}}(\vec{r})=0$. Conversely, if the magnetic moment lies entirely within the confinement plane, $\vec{B}_{\text{on-site}}(\vec{r})=\mu_0 \vec{\mu} \, \delta(\vec{r})$.
\end{comment}

\section{A theorem of classical magnetostatics for test integrals and its application to two-dimensional ferromagnets}

Another highly practical scenario that can be elegantly resolved using the symmetry-breaking arguments developed above concerns two-dimensional ferromagnets. These systems are realization-dependent and typically obtained either as ultrathin transition-metal films deposited on non-magnetic substrates or as isolated monolayers of exfoliated bulk crystals \cite{PIP}. Such systems carry a distribution of magnetic moments strictly confined to a planar geometry. To analyze this class of low-dimensional systems, we establish a theorem that bridges the abstract distribution-theoretic test integral directly with classical magnetostatics.

\begin{hypothesis}
Following Jackson \cite{Jackson} (p.~197), the macroscopic magnetic field $\vec{B}(\vec{r})$ generated by a continuous, bounded distribution of magnetic moments described by a magnetization vector field $\vec{M}(\vec{r})$ is given by:
\begin{equation}
\vec{B}(\vec{r}) = \frac{\mu_0}{4\pi} \vec{\nabla} \times \int \mathrm{d}V' \frac{\vec{\nabla}' \times \vec{M}(\vec{r}')}{|\vec{r} - \vec{r}'|},
\end{equation}
where the magnetization field $\vec{M}(\vec{r})$ represents the local magnetic moment per unit volume. In a discrete lattice with lattice constant $a$, where each localized magnetic dipole $\vec{\mu}$ occupies an atomic volume $a^3$, the magnetization can be approximated as $\vec{M}(\vec{r}) = \vec{\mu}(\vec{r})/a^3$. Here, the primed operator $\vec{\nabla}'$ acts exclusively on the source coordinates $\vec{r}'$.
\end{hypothesis}

Let us consider a specialized magnetization profile where a constant magnetic moment vector $\vec{\mu}$ is spatially modulated by a continuous, localized distribution function $f(\vec{r})$, such that $\vec{M}(\vec{r}) = \vec{\mu} f(\vec{r})$.

\begin{proposition}
\label{prop:magnetostatic}
Under these conditions, the macroscopic magnetic field evaluated at the origin $\vec{r} = \vec{0}$ is given exactly by:
\begin{equation}
\vec{B}(\vec{0}) = \frac{\mu_0}{4\pi} \int \mathrm{d}V f(\vec{r}) \vec{\nabla} \left( \sum_j \mu_j \partial_j \frac{1}{|\vec{r}|} \right) + \mu_0 \vec{\mu} f(\vec{0}).
\label{Magneto}
\end{equation}
\end{proposition}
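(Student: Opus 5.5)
Starting from the Hypothesis, we substitute $\vec{M}(\vec{r}') = \vec{\mu} f(\vec{r}')$. Because $\vec{\mu}$ is constant, $\vec{\nabla}'\times\vec{M} = \vec{\nabla}' f \times \vec{\mu}$. We first integrate by parts in $\vec{r}'$; the boundary terms vanish because $f$ is localized. This moves the primed curl onto the kernel $|\vec{r}-\vec{r}'|^{-1}$, and since the kernel depends only on $\vec{r}-\vec{r}'$ we have $\vec{\nabla}' = -\vec{\nabla}$ on it. The result is
\begin{equation}
\vec{B}(\vec{r}) = \frac{\mu_0}{4\pi}\,\vec{\nabla}\times\Big(\vec{\nabla}\times \int \mathrm{d}V' \,\frac{\vec{\mu} f(\vec{r}')}{|\vec{r}-\vec{r}'|}\Big)
= \frac{\mu_0}{4\pi}\,\vec{\nabla}\times\Big(\vec{\nabla}\times \vec{\mu}\,\Phi(\vec{r})\Big),
\qquad \Phi(\vec{r}) \equiv \int \mathrm{d}V' \,\frac{f(\vec{r}')}{|\vec{r}-\vec{r}'|}.
\end{equation}
This is the convolution analogue of equation \eqref{Vector}. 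Since $f$ is smooth and localized, $\Phi$ is a smooth function, so all derivatives with respect to $\vec{r}$ may legitimately be taken under the integral in the distributional sense.

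Next we apply the same vector identity used to obtain equation \eqref{B}:
\begin{equation}
\vec{B}(\vec{r}) = \frac{\mu_0}{4\pi}\Big[\vec{\nabla}\big(\vec{\mu}\cdot\vec{\nabla}\Phi\big) - \vec{\mu}\,\nabla^2\Phi\Big].
\end{equation}
The Laplacian term is Poisson's equation, $\nabla^2\Phi = -4\pi f$. This is the content of the identity $\nabla^2|\vec{r}|^{-1}=-4\pi\delta(\vec{r})$, which the paper already establishes for test integrals: by the cancellation in \maprop{prop:F33} together with the trace of $F$ in the spherical case, or by \maprop{prop:laplacian2d}. Evaluated at the origin, this term gives $-\frac{\mu_0}{4\pi}\vec{\mu}(-4\pi f(\vec{0})) = \mu_0\vec{\mu} f(\vec{0})$, which is exactly the second term of \maeq{Magneto}.

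For the first term, we write $\partial_i\partial_j\Phi(\vec{0})$ as a test integral. Setting $\vec{r}=\vec{0}$ and using the evenness of the kernel, $\Phi$ near the origin is the convolution $f * |\cdot|^{-1}$. Derivatives of a convolution may be placed on either factor, so $\partial_i\partial_j\Phi(\vec{0}) = \int \mathrm{d}V\, f(\vec{r})\,\partial_i\partial_j |\vec{r}|^{-1}$, where the right-hand side is understood as the distribution $F_{ij}$ of \maeq{FG_matrices}, i.e.\ with the derivatives transferred onto $f$. Summing $\mu_j$ over $j$ then gives the first term of \maeq{Magneto}.

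The main obstacle is the bookkeeping of signs and conventions at this last step. Two things must be checked: that the parity $\vec{r}'\to-\vec{r}'$ does not introduce a sign (two derivatives, so it does not), and that the singular kernel is interpreted consistently with the paper's definition of $F$. The paper defines $F$ by transferring derivatives onto $f$, not as a principal value, so we must not split off a delta term a second time; otherwise the $\mu_0\vec{\mu} f(\vec{0})$ contribution would be double counted. Accordingly, we keep $F_{ij}$ as the full distribution and extract the delta contribution only from the explicit $\nabla^2$ term.
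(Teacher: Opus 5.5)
Your proposal is correct and follows essentially the paper's route. You integrate by parts to reach $\frac{\mu_0}{4\pi}\vec{\nabla}\times\big(\vec{\nabla}\times\int \mathrm{d}V'\,\vec{M}(\vec{r}\,')/|\vec{r}-\vec{r}\,'|\big)$, apply the double-curl identity, evaluate the Laplacian term with $\nabla^2|\vec{r}|^{-1}=-4\pi\delta(\vec{r})$, and convert the remaining observation-point derivatives into source-point derivatives, with the two sign changes cancelling. Placing the derivatives on $f$ through the convolution matches the paper's definition of $F_{ij}$ exactly, and justifies the paper's step of pulling the derivatives inside the integral somewhat more cleanly. One small point: for a general, non-symmetric $f(\vec{r})$, the identity $\nabla^2\Phi=-4\pi f$ is the standard Poisson result. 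It is not supplied by Proposition~\ref{prop:F33} or Proposition~\ref{prop:laplacian2d}, which treat only spherically symmetric and product test functions.
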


\begin{proof}
The derivation utilizes standard vector identity expansions and integration-by-parts techniques. The complete step-by-step verification is detailed in \SM{magneto}.
\end{proof}

\noindent The mathematical expression on the right-hand side of equation \eqref{Magneto} is formally identical to the distribution-theoretic test integral of the magnetic field of a single localized point dipole $\vec{\mu}$ defined in equation \eqref{B}. This equivalence translates the abstract concept of a test integral of $\vec{B}$ into a concrete, observable macroscopic field value within classical magnetostatics.

As a direct application of this theorem, we utilize the previously calculated values for the test integrals $F_{\perp}$ and $F_{\parallel}$ of equations \eqref{F_parallel_def} and \eqref{F_perp_def}, together with $G_{ii}$ from Proposition~\ref{prop:laplacian2d}, and the diagonal structure guaranteed by Proposition~\ref{prop:axial}, to analyze a uniformly magnetized ultrathin slab with constant magnetization $\vec{M}$. Because the continuous in-plane translational symmetry allows us to arbitrarily place the origin of the coordinate system anywhere within the plane of the slab, the magnetic field computed at $\vec{r} = \vec{0}$ is uniform throughout the entire volume of the layer.

It is instructive to examine the perpendicular-to-plane magnetization configuration in more detail. The field at any site consists of three components:
\begin{itemize}
\item The first originates from the matrix $F$ and amounts to $-\mu_0 h(0) g(0)$. It represents the part of the magnetic field that a dipole exerts upon itself; we call it a ``local'' component.
\item The second, also originating from the $F$-matrix, is $-\frac{\mu_0}{2\pi} F_\parallel$. It represents the magnetic field that all neighbouring magnetic moments, the distant ones included, exert on the magnetic moment at the site under consideration. It is, accordingly, a ``non-local'' component.
\item The third is again local and originates from the $G$-matrix: $+\mu_0 h(0) g(0)$.
\end{itemize}
These results allow for a striking physical prediction. Provided the slab is infinitely extended, i.e.\ $\Lambda \to \infty$, the quantity $F_\parallel$ vanishes and the two local components \textit{cancel out} exactly, so that the $\vec{B}$ field vanishes. If the slab is finite, the local components still cancel, but the non-local one---however weak---survives.

Experimental observations~\cite{Yves} appear to confirm this prediction. Those experiments attempted to inject a uniform perpendicular magnetic field into a circular, micron-sized dot of nanometre thickness. The field did enter the dot, but not uniformly: it penetrated first at the rim, and only then spread over the whole of the disc. The non-local component also plays a major role whenever spatially non-uniform spin configurations in infinitely extended, perpendicularly magnetized systems are considered~\cite{PIP}.

\section{Conclusion}
We close this paper with a few concluding remarks. One pertains to the vector field components given by $\frac{1}{4\pi} \sum_j \partial_i \partial_j |\vec{r}|^{-1} \mu_j$ (the first term in equation \eqref{B}). This component represents the $\vec{H}(\vec{r})$ field of the magnetic dipole, which is an auxiliary vector field commonly used in magnetostatics. The $\vec{H}(\vec{r})$ field of the magnetic dipole possesses a particularly significant property in the context of this work. Specifically, the quantity $\mu_0 \vec{H}$ represents the magnetic field produced by a pair of fictitious magnetic charges within Poisson's model. As equation \eqref{B} demonstrates, the field arising from $\vec{H}$ is identical to the true Amp\`erian $\vec{B}$-field everywhere except at the exact site of the dipole. This subtle, highly localized difference explains why distinguishing between the Poisson--Maxwell and Amp\`ere hypotheses experimentally required such high-precision atomic physics.

In summary, we have reframed the long-standing problem of the singular on-site magnetic field of a localized point dipole by mapping the distribution-theoretic test integrals onto the matrix representation of differential operators. By exploiting the spatial symmetry groups of the problem, we demonstrated how Schur's Lemma can be utilized as a powerful computational aid to evaluate these highly singular fields without resorting to cumbersome brute-force integration.

Furthermore, we extended this formalism to a symmetry-broken environment, demonstrating how spatial confinement within a two-dimensional slab renders the traditional Fermi contact potential highly anisotropic. Finally, we proved a magnetostatic theorem that establishes a direct equivalence between these abstract test integrals and the macroscopic fields of continuous magnetization distributions, applying it successfully to ultrathin magnetic films.

The primary pedagogical and scientific conclusion of this work is that group-theoretic symmetry arguments---which students typically associate exclusively with quantum mechanics or solid-state physics---can be seamlessly applied to classical magnetostatics. This approach provides advanced undergraduate and graduate students with a unified, elegant conceptual tool to master complex mathematical singularities in physical systems.

\section*{Organisation of the appendices}
The two theorems on which the symmetry argument rests are set out in
appendix~\ref{app:schur}. The derivations supporting the more technical
propositions follow, one to an appendix:
appendix~\ref{app:vecpot}, the vector potential of a small current loop;
appendix~\ref{app:singular}, the singular behaviour of $\partial_3^2 |\vec{r}|^{-1}$;
appendix~\ref{app:commute}, the proof of Proposition~\ref{prop:commute};
appendix~\ref{app:F33}, the proof of Proposition~\ref{prop:F33};
appendix~\ref{app:prop6}, the proof of Proposition~\ref{prop:laplacian2d};
appendix~\ref{app:slab}, the slab model supporting the conjecture $F_\parallel \to 0$;
and appendix~\ref{app:magneto}, the proof of Proposition~\ref{prop:magnetostatic}.

\appendix
\numberwithin{equation}{section}

\section{Schur's Lemma and the spectral theorem}
\label{app:schur}

The proof of Proposition~\ref{prop:scalar} rests on two theorems drawn from different parts of mathematics: Schur's Lemma, which belongs to the representation theory of groups, and the spectral theorem, which belongs to linear algebra. Each contributes something the other cannot, and the division of labour between them is worth setting out.

\medskip
\noindent \textit{1. Irreducible representations.} A set of matrices representing a group acts on a vector space $V$. The representation is called \textit{irreducible} when no subspace of $V$ other than $\{\vec{0}\}$ and $V$ itself is carried into itself by every matrix of the set. The $3\times3$ rotation matrices acting on Euclidean space furnish the standard example. No line survives: a rotation through $\pi/2$ about an axis perpendicular to a given line carries that line elsewhere. No plane survives either, for rotations preserve the scalar product, so an invariant plane would be accompanied by an invariant perpendicular line, which has just been excluded.

\medskip
\noindent \textit{2. Schur's Lemma.} The lemma states that a matrix $A$ commuting with every matrix of an irreducible representation is a multiple of the identity \cite{DP}. Its proof is short. Let $\lambda$ be an eigenvalue of $A$ and consider
\begin{equation}
V_\lambda = \{ \vec{v} \in V : A\vec{v} = \lambda \vec{v} \},
\end{equation}
which is not the zero space. For any matrix $R$ of the representation, $A(R\vec{v}) = R(A\vec{v}) = \lambda R\vec{v}$, so $R$ carries $V_\lambda$ into itself. As the representation is irreducible, $V_\lambda$ must be the whole of $V$, which is to say $A = \lambda I$.

Notice that the argument opens by asserting that $A$ \textit{has} an eigenvalue. This is guaranteed only over the complex numbers, where the characteristic polynomial is certain to have a root. Schur's Lemma is for this reason a statement about representations on complex vector spaces, and to invoke it for the real matrix $F$ one regards $F$ and the rotation matrices as acting on $\mathbb{C}^3$ rather than on $\mathbb{R}^3$. The matrices are unchanged and the representation remains irreducible, but the scalar the lemma returns is a priori a complex number: representation theory alone cannot say otherwise.

\medskip
\noindent \textit{3. The spectral theorem.} What restricts $\lambda$ is a property of $F$ that has nothing to do with rotations. By its definition \eqref{FG_matrices} the matrix is real and symmetric, $F_{ij} = F_{ji}$, since the order of the two partial derivatives is immaterial. A real symmetric matrix is Hermitian, and the spectral theorem asserts that a Hermitian matrix has real eigenvalues and an orthonormal basis of eigenvectors. It is the same theorem that guarantees real expectation values for the observables of quantum mechanics. Since $\lambda$ is the eigenvalue of $F = \lambda I$, it is real, and Proposition~\ref{prop:scalar} is established.

\section{The vector potential of a small current loop}
\label{app:vecpot}
In Ref.~\cite{Jackson} (pp.~184--186), the vector potential for a single magnetic dipole is obtained from a volume integral containing a steady current density. Here, we follow an alternative approach, which is less abstract and directly based on Amp\`ere's hypothesis. We consider a thin wire forming a small, closed loop $C$ that bounds a surface $S$. For such a thin wire, the volume integral over the current density can be rewritten as a line integral along the loop $C$ and subsequently transformed, via Stokes' theorem, into an integral over the surface $S$:
\begin{equation}
\vec{A}(\vec{r}) = \frac{\mu_0 I}{4\pi} \oint_C \frac{1}{|\vec{r}-\vec{r}'|} \,\mathrm{d}\vec{s}' = \frac{\mu_0 I}{4\pi} \int_{S} \vec{n} \times \vec{\nabla}' \left( \frac{1}{|\vec{r}-\vec{r}'|} \right) \mathrm{d}S'.
\label{A_App}
\end{equation}
In equation \eqref{A_App}, the coordinates along the loop are denoted by $\vec{r}'$, $\mathrm{d}\vec{s}'$ is the vector line element along the path, $\vec{n}$ is the unit normal vector to the surface $S$, and $\vec{\nabla}'$ denotes the gradient operator with respect to the source coordinates $\vec{r}'$.

Evaluating the gradient with respect to the primed coordinates via the chain rule yields:
\begin{equation}
\vec{\nabla}' \left( \frac{1}{|\vec{r}-\vec{r}'|} \right) = \frac{\vec{r}-\vec{r}'}{|\vec{r}-\vec{r}'|^3}.
\label{Eq:Nabla_App}
\end{equation}
For an ideal point dipole located at the origin, the spatial dimensions of the loop shrink to zero ($|\vec{r}'| \to 0$), while the observation point remains far away ($|\vec{r}| \gg |\vec{r}'|$). We can approximate equation \eqref{Eq:Nabla_App} to first order by setting $\vec{r}' \approx \vec{0}$:
\begin{equation}
\vec{\nabla}' \left( \frac{1}{|\vec{r}-\vec{r}'|} \right) \approx \frac{\vec{r}}{|\vec{r}|^3} + \mathcal{O}\left(\frac{|\vec{r}'|}{|\vec{r}|^3}\right).
\end{equation}
Substituting this expansion back into the surface integral, the term $\vec{r}/|\vec{r}|^3$ depends solely on the unprimed observation coordinate and can be factored outside of the integration:
\begin{equation}
\vec{A}(\vec{r}) \approx \frac{\mu_0 I}{4\pi} \left( \int_{S} \vec{n} \,\mathrm{d}S' \right) \times \frac{\vec{r}}{|\vec{r}|^3}.
\end{equation}
We define the magnetic dipole moment $\vec{\mu}$ as the product of the current and the vector area of the loop:
\begin{equation}
\vec{\mu} = I \int_{S} \vec{n} \,\mathrm{d}S' = I S \vec{n}.
\end{equation}
Using this definition, the classic expression for the vector potential of a point magnetic dipole takes the familiar form:
\begin{equation}
\vec{A}(\vec{r}) = \frac{\mu_0}{4\pi} \frac{\vec{\mu} \times \vec{r}}{|\vec{r}|^3} = \frac{\mu_0}{4\pi} \vec{\nabla} \times \frac{\vec{\mu}}{|\vec{r}|}.
\end{equation}
The expression on the right-hand side is established directly using a standard identity of vector analysis, and is equation \maeq{Vector}.

\section{Singular behavior of $\partial_3^2 |\vec{r}|^{-1}$}
\label{app:singular}
We discuss one of the nine partial derivatives in equation \maeq{B} to illustrate the singular behavior at the origin. For any $\vec{r} \neq \vec{0}$, explicit differentiation yields:
\begin{equation}
\partial_3^2 \left(\frac{1}{|\vec{r}|}\right) = \frac{3x_3^2 - |\vec{r}|^2}{|\vec{r}|^5} = \frac{2x_3^2 - x_1^2 - x_2^2}{(x_1^2 + x_2^2 + x_3^2)^{5/2}}.
\end{equation}
Considering the algebraic expression on the right-hand side as a standard ``point'' function, one would be tempted to evaluate its value at $\vec{r} = \vec{0}$ by computing its limit when the variables approach zero. For example, let us approach the origin strictly along the $x_3$-axis. We obtain:
\begin{equation}
\lim_{\substack{x_3 \to 0 \\ x_1=x_2=0}} \frac{2x_3^2 - x_1^2 - x_2^2}{(x_1^2 + x_2^2 + x_3^2)^{5/2}} = \lim_{x_3 \to 0} \frac{2x_3^2}{|x_3|^5} = \lim_{x_3 \to 0} \frac{2}{|x_3|^3} = \infty.
\end{equation}
However, approaching the origin along the diagonal line $x_1 = x_2 = x_3$ yields a completely different result:
\begin{equation}
\lim_{\substack{x_3 \to 0 \\ x_1=x_2=x_3}} \frac{2x_3^2 - x_1^2 - x_2^2}{(x_1^2 + x_2^2 + x_3^2)^{5/2}} = \lim_{x_3 \to 0} \frac{0}{(3x_3^2)^{5/2}} = 0.
\end{equation}
Because the limit depends heavily on the direction of approach, the second derivative $\partial_3^2 |\vec{r}|^{-1}$ is ill-defined at $\vec{r} = \vec{0}$ when treated as an ordinary function in standard calculus.

\section{Proof of commutativity of $F$ with a rotation matrix $R$ (\maprop{prop:commute})}
\label{app:commute}

\noindent \textit{1. Coordinate Transformation:} We consider a rotation of the coordinates defined by $\vec{x} = R\vec{y}$. Since $R$ is an orthogonal rotation matrix, $R^T = R^{-1}$. In index notation, this reads:
\begin{equation}
x_i = \sum_{k} R_{ik} y_k \quad \text{and} \quad \frac{\partial x_i}{\partial y_k} = R_{ik}.
\end{equation}
Due to the orthogonality of $R$, the inverse transformation similarly yields:
\begin{equation}
y_k = \sum_{i} R_{ik} x_i \quad \text{and} \quad \frac{\partial y_k}{\partial x_i} = R_{ik}.
\end{equation}

\noindent \textit{2. Transformation of Partial Derivatives:} We transform the partial derivatives from the $\vec{x}$-system to the $\vec{y}$-system via the chain rule. For the first derivative, we have:
\begin{equation}
\frac{\partial}{\partial x_i} = \sum_{k} \frac{\partial y_k}{\partial x_i} \frac{\partial}{\partial y_k} = \sum_{k} R_{ik} \frac{\partial}{\partial y_k}.
\end{equation}
For the second partial derivatives (the components of the Hessian matrix), applying this relation twice yields:
\begin{equation}
\frac{\partial^2}{\partial x_i \partial x_j} = \left(\sum_{k} R_{ik} \frac{\partial}{\partial y_k}\right) \left(\sum_{m} R_{jm} \frac{\partial}{\partial y_m}\right) = \sum_{k, m} R_{ik} R_{jm} \frac{\partial^2}{\partial y_k \partial y_m}.
\end{equation}

\noindent \textit{3. Integral Substitution:} Because the functions $r^{-1}$ and $f(r)$ depend solely on the rotationally invariant radius $r = |\vec{x}| = |\vec{y}|$, their functional forms remain identical in the transformed coordinate system. Furthermore, since the determinant of the Jacobian matrix for an orthogonal transformation is unity, the volume element is invariant, i.e., $\mathrm{d}^3x = \mathrm{d}^3y$. Substituting these transformed derivatives into the definition of $F_{ij}$ gives:
\begin{equation}
F_{ij} = \int_{\mathbb{R}^3} \left( \sum_{k, m} R_{ik} R_{jm} \frac{\partial^2}{\partial y_k \partial y_m}\left(\frac{1}{r}\right) \right) f(r) \,\mathrm{d}^3y.
\end{equation}

\noindent \textit{4. Factoring the Rotation Matrices:} Since the components of the matrix $R$ are constant coefficients, we can pull the summations and the matrix elements outside of the integral:
\begin{equation}
F_{ij} = \sum_{k, m} R_{ik} R_{jm} \left( \int_{\mathbb{R}^3} f(r) \frac{\partial^2}{\partial y_k \partial y_m}\left(\frac{1}{r}\right) \,\mathrm{d}^3y \right).
\end{equation}
The remaining definite integral over the entire space $\mathbb{R}^3$ depends solely on the dummy integration variables $\vec{y}$. Renaming these integration variables back to $\vec{r}$ reveals that the integral corresponds exactly to the definition of the matrix component $F_{km}$:
\begin{equation}
F_{ij} = \sum_{k, m} R_{ik} F_{km} R_{jm}.
\end{equation}

\noindent \textit{5. Conversion to Matrix Notation:} We rewrite this index equation in compact matrix form. Utilizing the relation $R_{jm} = (R^T)_{mj}$, we obtain:
\begin{equation}
F_{ij} = \sum_{k, m} R_{ik} F_{km} (R^T)_{mj} \implies F = R F R^T.
\end{equation}
Because $R^T = R^{-1}$ holds for any rotation matrix, multiplying from the right by $R$ immediately delivers the desired commutation relation $FR = RF$. \hfill $\blacksquare$

\section{Proof of \maprop{prop:F33} (equation \maeq{Prop4_Eq})}
\label{app:F33}
Since $f(|\vec{r}|)$ depends only on the radial coordinate, we evaluate the test integral \maeq{F33} in spherical coordinates using standard conventions:
\begin{equation}
x_1 = r \sin\theta \cos\phi, \quad x_2 = r \sin\theta \sin\phi, \quad x_3 = r \cos\theta.
\end{equation}
Transforming the operator $\partial_3^2$ and retaining only the terms containing radial derivatives (as the operator acts on a spherically symmetric function), we obtain:
\begin{equation}
\frac{\partial^2}{\partial x_3^2} = \cos^2\theta \frac{\partial^2}{\partial r^2} + \frac{\sin^2\theta}{r} \frac{\partial}{\partial r}.
\end{equation}
The test integral thus splits into the sum of two separate components:
\begin{equation}
F_{33} = \lim_{\epsilon\rightarrow 0} \int_\epsilon^{\infty} \mathrm{d}r \, r \frac{\partial^2 f(r)}{\partial r^2} \int \mathrm{d}\Omega \cos^2\theta + \lim_{\epsilon\rightarrow 0} \int_\epsilon^{\infty} \mathrm{d}r \frac{\partial f(r)}{\partial r} \int \mathrm{d}\Omega \sin^2\theta.
\end{equation}
Integration over the spherical angular coordinates yields:
\begin{equation}
\int \mathrm{d}\Omega \cos^2\theta = \frac{4\pi}{3}, \quad \int \mathrm{d}\Omega \sin^2\theta = \frac{8\pi}{3}.
\end{equation}
The first radial integral, associated with the $\cos^2\theta$ component, is evaluated using integration by parts:
\begin{equation}
\int_\epsilon^{\infty} r \frac{\partial^2 f(r)}{\partial r^2} \,\mathrm{d}r = \left[ r \frac{\partial f(r)}{\partial r} \right]_\epsilon^\infty - \int_\epsilon^{\infty} \frac{\partial f(r)}{\partial r} \,\mathrm{d}r.
\end{equation}
The boundary term at infinity vanishes because the derivatives of a well-behaved test function decay rapidly as $r \to \infty$. In the limit $\epsilon \to 0$, the lower boundary contribution $- \epsilon f'(\epsilon)$ likewise vanishes for any smooth function, leaving only the remaining integral:
\begin{equation}
\lim_{\epsilon\to 0} \left( - \int_\epsilon^{\infty} \frac{\partial f(r)}{\partial r} \,\mathrm{d}r \right) = \lim_{\epsilon\to 0} \Big[ f(\epsilon) - f(\infty) \Big] = f(0).
\end{equation}
The second radial integral, associated with the $\sin^2\theta$ component, is elementary and directly integrates to:
\begin{equation}
\lim_{\epsilon\to 0} \int_\epsilon^{\infty} \frac{\partial f(r)}{\partial r} \,\mathrm{d}r = \lim_{\epsilon\to 0} \Big[ f(\infty) - f(\epsilon) \Big] = -f(0).
\end{equation}
Combining these individual components with their respective angular weights yields:
\begin{equation}
F_{33} = \frac{4\pi}{3} f(0) + \frac{8\pi}{3} \big[ -f(0) \big] = -\frac{4\pi}{3} f(0).
\end{equation}
This establishes equation \maeq{Prop4_Eq}. \hfill $\blacksquare$

\section{Proof of \maprop{prop:laplacian2d} (equation \maeq{Prop6_Eq})}
\label{app:prop6}
By definition, this test integral amounts to
\begin{equation}
\int_{\mathbb{R}^3} \frac{1}{\sqrt{\rho^2+x_3^2}} \, \nabla^2 \Big( h(x_3) g(\rho) \Big) \,\mathrm{d}V.
\end{equation}
Following Ref.~\cite{Halperin} (p.~26), we evaluate the test integral using Green's second identity:
\begin{equation}
\int_{V} \Big( u \nabla^2 v - v \nabla^2 u \Big) \,\mathrm{d}V = \int_{\partial V} \Big( u \vec{\nabla} v - v \vec{\nabla} u \Big) \cdot \mathrm{d}\vec{S},
\label{Green}
\end{equation}
where the integration domain $V$ typically excludes a small sphere around the origin. In the present slab geometry, it is physically more natural to define $V$ by excluding a small \textit{cylinder} ($C$) centered at the origin, characterized by its total height $\epsilon_3$ and radius $\epsilon_\rho$.

We set $u \equiv (\rho^2+x_3^2)^{-1/2}$ and $v \equiv h(x_3) g(\rho)$ in equation \eqref{Green}. Since $\nabla^2 u = 0$ holds identically everywhere within the remaining volume $V$ outside the small cylinder, the volume integral on the left-hand side reduces to:
\begin{equation}
\int_V \frac{1}{\sqrt{\rho^2+x_3^2}} \, \nabla^2 \Big( h(x_3) g(\rho) \Big) \,\mathrm{d}V = \int_{\partial C} \left[ \frac{1}{\sqrt{\rho^2+x_3^2}} \, \vec{\nabla} \big( h(x_3) g(\rho) \big) - \vec{\nabla} \left( \frac{1}{\sqrt{\rho^2+x_3^2}} \right) h(x_3) g(\rho) \right] \cdot \mathrm{d}\vec{S}.
\label{cylinder_surface}
\end{equation}
Note that the surface normal $\mathrm{d}\vec{S}$ points outwards from the domain $V$, which corresponds to an \textit{inward-pointing} normal with respect to the cylinder $C$.

We now estimate the two surface integrals on the right-hand side of equation \eqref{cylinder_surface}. Since the test functions are smooth and well-behaved at the origin, their gradient is bounded there: let $M \equiv \sup_{\partial C} \left| \vec{\nabla}\big(h(x_3)g(\rho)\big) \right| < \infty$. Pulling this bound out leaves the kernel to be integrated over the boundary, which can be done exactly. The two flat caps at $x_3 = \pm\epsilon_3/2$ contribute
\begin{equation}
2 \cdot 2\pi \int_0^{\epsilon_\rho} \frac{\rho \,\mathrm{d}\rho}{\sqrt{\rho^2 + (\epsilon_3/2)^2}} = 4\pi \left( \sqrt{\epsilon_\rho^2 + \epsilon_3^2/4} - \frac{\epsilon_3}{2} \right) \leq 4\pi \epsilon_\rho,
\end{equation}
and the cylindrical mantle at $\rho = \epsilon_\rho$ contributes
\begin{equation}
2\pi \epsilon_\rho \int_{-\epsilon_3/2}^{+\epsilon_3/2} \frac{\mathrm{d}x_3}{\sqrt{\epsilon_\rho^2 + x_3^2}} \leq 2\pi \epsilon_\rho \cdot \frac{\epsilon_3}{\epsilon_\rho} = 2\pi \epsilon_3.
\end{equation}
Together,
\begin{equation}
\left| \int_{\partial C} \frac{1}{\sqrt{\rho^2+x_3^2}} \, \vec{\nabla} \big( h(x_3) g(\rho) \big) \cdot \mathrm{d}\vec{S} \right| \leq 2\pi M \big( 2\epsilon_\rho + \epsilon_3 \big) = \mathcal{O}\big( \epsilon_\rho + \epsilon_3 \big).
\end{equation}
As $(\epsilon_3, \epsilon_\rho) \rightarrow (0,0)$, this first contribution vanishes uniformly---for any aspect ratio $\epsilon_3/\epsilon_\rho$.

For the second surface integral, we exploit the smoothness of the test functions to approximate them by their local boundary values $h(x_3)g(\rho) \approx h(\epsilon_3/2)g(\epsilon_\rho)$. The boundary $\partial C$ splits into two flat caps (at $x_3 = \pm \epsilon_3/2$) and a cylindrical mantle (at $\rho = \epsilon_\rho$).

The combined contribution from the top and bottom caps is given by:
\begin{equation}
- 2 \cdot h(\epsilon_3/2)g(\epsilon_\rho) \cdot 2\pi \int_0^{\epsilon_\rho} \frac{x_3}{(\rho^2+x_3^2)^{3/2}} \, \rho \,\mathrm{d}\rho \;\Bigg\vert_{x_3 =\frac{\epsilon_3}{2}} = - 4\pi \cdot h(\epsilon_3/2)g(\epsilon_\rho) \left( 1 - \frac{\epsilon_3}{\sqrt{\epsilon^2_3 + 4\epsilon^2_\rho}} \right).
\end{equation}
The contribution from the cylindrical mantle evaluates to:
\begin{equation}
- h(\epsilon_3/2)g(\epsilon_\rho) \cdot 2\pi \epsilon_\rho \int_{-\frac{\epsilon_3}{2}}^{+\frac{\epsilon_3}{2}} \frac{\epsilon_\rho}{(\epsilon_\rho^2+x_3^2)^{3/2}} \,\mathrm{d}x_3 = - 4\pi \cdot h(\epsilon_3/2)g(\epsilon_\rho) \left( \frac{\epsilon_3}{\sqrt{\epsilon^2_3 + 4\epsilon^2_\rho}} \right).
\end{equation}
Summing the contributions from the caps and the mantle, the terms containing the geometric parameters $\epsilon_3$ and $\epsilon_\rho$ cancel out exactly. The total boundary integral yields $-4\pi h(0) g(0)$ in the limit, completely independent of the specific geometric aspect ratio chosen to approach the origin.
\hfill $\blacksquare$

\section{A conjecture about $F_\parallel$ in the limit $d/\Lambda \rightarrow 0$}
\label{app:slab}

\subsection{A simple step-like test function}
We evaluate $F_\parallel$ for a class of idealized, step-like test functions. In accordance with the symmetry $h(x_3) = h(-x_3)$ assumed for the slab model, the confinement profile is centred on the dipole site: $h(x_3) = h_0$ for $|x_3| \leq d/2$ and $h(x_3) = 0$ otherwise, so that the slab of thickness $d$ straddles the plane $x_3 = 0$ rather than resting on it. The in-plane profile is $g(\rho) = g_0$ for $\rho \in [0,\Lambda]$ and $g(\rho) = 0$ for $\rho > \Lambda$.

It is convenient to perform the $x_3$-integration in equation \maeq{F_parallel_def} first. We therefore introduce the kernel
\begin{equation}
K(\rho) \equiv \int \mathrm{d}x_3 \, \frac{h(x_3)}{\sqrt{\rho^2 + x_3^2}} = h_0 \int_{-d/2}^{+d/2} \frac{\mathrm{d}x_3}{\sqrt{\rho^2 + x_3^2}} = 2 h_0 \operatorname{arcsinh}\left( \frac{d}{2\rho} \right),
\label{Kernel}
\end{equation}
which is not to be confused with the spherically symmetric test function $f(|\vec{r}|)$ of the isotropic case. In terms of $K$, equation \maeq{F_parallel_def} reads
\begin{equation}
F_\parallel = \pi \int_0^\infty \rho \, K(\rho) \, \nabla^2_\rho g(\rho) \,\mathrm{d}\rho,
\end{equation}
the prefactor $\pi$ collecting the azimuthal integration together with the fact that $\nabla^2_\rho$ delivers $F_{11} + F_{22} = 2F_\parallel$ rather than $F_\parallel$ itself.

To evaluate this integral, we employ the radial representation of the two-dimensional Laplacian, $\nabla_\rho^2 = \frac{1}{\rho} \frac{\partial}{\partial \rho} \left( \rho \frac{\partial}{\partial \rho} \right)$. Substituting this operator into the integrand cancels the explicit prefactor of $\rho$:
\begin{equation}
F_\parallel = \pi \int_0^\Lambda K(\rho) \frac{\partial}{\partial \rho} \left( \rho \frac{\partial g(\rho)}{\partial \rho} \right) \mathrm{d}\rho.
\end{equation}
Since $g(\rho)$ is a sharp step function dropping to zero at $\rho = \Lambda$, its derivative contains a Dirac delta function. To capture this boundary contribution, we extend the integration limit to $\Lambda^+$. Integrating by parts yields:
\begin{equation}
F_\parallel = \pi \left[ K(\rho) \cdot \rho \frac{\partial g(\rho)}{\partial \rho} \right]_0^{\Lambda^+} - \pi \int_0^{\Lambda^+} \frac{\partial K(\rho)}{\partial \rho} \left( \rho \frac{\partial g(\rho)}{\partial \rho} \right) \mathrm{d}\rho.
\end{equation}
The boundary term vanishes at the lower limit $\rho = 0$ due to the factor of $\rho$---the kernel \eqref{Kernel} diverges there only logarithmically, so that $\rho K(\rho) \to 0$---and at the upper limit $\rho = \Lambda^+$ because $g(\rho)$ becomes a constant zero outside the disk. Expressing the step-like profile as $g(\rho) = g_0 \Theta(\Lambda - \rho)$, its distributional derivative reads $\frac{\partial g(\rho)}{\partial \rho} = -g_0 \delta(\rho - \Lambda)$. Substituting this distribution into the remaining integral reduces it to a boundary evaluation:
\begin{equation}
F_\parallel = - \pi \int_0^{\Lambda^+} \frac{\partial K(\rho)}{\partial \rho} \cdot \rho \cdot \Big[-g_0 \delta(\rho - \Lambda)\Big] \,\mathrm{d}\rho = \pi g_0 \Lambda \left. \frac{\partial K(\rho)}{\partial \rho} \right\vert_{\rho = \Lambda}.
\end{equation}
We evaluate the derivative $\frac{\partial K(\rho)}{\partial \rho}$ by differentiating under the integral sign:
\begin{equation}
\frac{\partial K(\rho)}{\partial \rho} = -h_0 \rho \int_{-d/2}^{+d/2} \frac{\mathrm{d}x_3}{(\rho^2 + x_3^2)^{3/2}} = -h_0 \rho \left[ \frac{x_3}{\rho^2 \sqrt{\rho^2 + x_3^2}} \right]_{-d/2}^{+d/2} = -\frac{h_0 \, d}{\rho \sqrt{\rho^2 + d^2/4}}.
\end{equation}
Evaluating this derivative at the boundary $\rho = \Lambda$ yields:
\begin{equation}
\left. \frac{\partial K(\rho)}{\partial \rho} \right\vert_{\rho = \Lambda} = -\frac{h_0 \, d}{\Lambda \sqrt{\Lambda^2 + d^2/4}}.
\end{equation}
Multiplying by the scaling factor $\pi g_0 \Lambda$ gives the exact closed-form value for this model profile:
\begin{equation}
F_\parallel = -\frac{\pi h_0 g_0 \, d}{\sqrt{\Lambda^2 + d^2/4}} = -\frac{2 \pi h_0 g_0 \, d}{\sqrt{4\Lambda^2 + d^2}}.
\end{equation}
For $d/\Lambda \ll 1$ we have $F_\parallel = \mathcal{O}(d/\Lambda)$, i.e.\ for this class of test functions $F_\parallel$ vanishes in the limit $d/\Lambda \rightarrow 0$.

\subsection{A step-like profile for $h(x_3)$ with an exponential distribution $g(\rho) = g_0 \mathrm{e}^{-\rho/\Lambda}$}
We retain the symmetric slab profile $h(x_3)$ of the previous subsection, and hence the kernel \eqref{Kernel}. Employing the radial form of the Laplace operator as before:
\begin{equation}
F_\parallel = \pi \int_0^\infty K(\rho) \frac{\partial}{\partial \rho} \left( \rho \frac{\partial g(\rho)}{\partial \rho} \right) \mathrm{d}\rho.
\end{equation}
Because the surface boundary terms vanish identically for an exponentially decaying profile, integration by parts gives:
\begin{equation}
F_\parallel = -\pi \int_0^\infty \frac{\partial K(\rho)}{\partial \rho} \cdot \rho \frac{\partial g(\rho)}{\partial \rho} \,\mathrm{d}\rho = -\pi h_0 g_0 \frac{d}{\Lambda} \int_0^\infty \frac{\mathrm{e}^{-\rho/\Lambda}}{\sqrt{\rho^2 + d^2/4}} \,\mathrm{d}\rho.
\end{equation}
To transform this expression into a standard integral identity, we apply the substitution $\rho = \Lambda t$ (with $\mathrm{d}\rho = \Lambda \,\mathrm{d}t$), which factors out the structural scales:
\begin{equation}
F_\parallel = -\pi h_0 g_0 \frac{d}{\Lambda} \int_0^\infty \frac{\mathrm{e}^{-t}}{\sqrt{t^2 + \left(\frac{d}{2\Lambda}\right)^2}} \,\mathrm{d}t.
\end{equation}
This definite integral can be evaluated exactly in terms of the Struve function of the first kind $\mathbf{H}_0$ and the Neumann function (Bessel function of the second kind) $Y_0$ using the known integral identity:
\begin{equation}
\int_0^\infty \frac{\mathrm{e}^{-t}}{\sqrt{t^2 + z^2}} \,\mathrm{d}t = \frac{\pi}{2} \left[ \mathbf{H}_0(z) - Y_0(z) \right].
\end{equation}
Setting the dimensionless scaling argument to $z = d/2\Lambda$---half the value it would take for a slab resting on the plane $x_3 = 0$---the closed-form expression for $F_\parallel$ reads:
\begin{equation}
F_\parallel = -h_0 g_0 \frac{\pi^2 d}{2\Lambda} \left[ \mathbf{H}_0\left(\frac{d}{2\Lambda}\right) - Y_0\left(\frac{d}{2\Lambda}\right) \right].
\end{equation}
We examine the asymptotic behavior of this expression in the tight-confinement limit ($d/\Lambda \to 0$). Utilizing the power-series expansion of the Struve function,
\begin{equation}
\mathbf{H}_0(z) = \frac{2}{\pi} z + \mathcal{O}(z^3),
\end{equation}
and the small-argument asymptotic expansion of the Neumann function,
\begin{equation}
Y_0(z) = \frac{2}{\pi} \left[ \ln\left(\frac{z}{2}\right) + \gamma \right] + \mathcal{O}(z^2),
\end{equation}
where $\gamma \approx 0.5772$ is the Euler--Mascheroni constant, we substitute these expressions back into the total formulation. Retaining terms up to first order in $z$, the asymptotic expansion of $F_\parallel$ yields:
\begin{equation}
F_\parallel \approx \pi h_0 g_0 \frac{d}{\Lambda} \ln\left(\frac{d}{\Lambda}\right) - \pi h_0 g_0 \frac{d}{\Lambda} \left[ \ln(4) - \gamma \right] + \mathcal{O}\left(\left(\frac{d}{\Lambda}\right)^2\right).
\label{AppE_Asymptotic}
\end{equation}
This confirms that $F_\parallel$ vanishes in the pure two-dimensional limit ($d/\Lambda \to 0$), scaling predominantly as $\mathcal{O}\big(\frac{d}{\Lambda} \ln \frac{d}{\Lambda}\big)$.

\section{Proof of equation \maeq{Magneto} (\maprop{prop:magnetostatic})}
\label{app:magneto}
The magnetic field of a continuous magnetization distribution $\vec{M}(\vec{r})$ is given by:
\begin{equation}
\vec{B}(\vec{r}) = \frac{\mu_0}{4\pi} \vec{\nabla} \times \int \mathrm{d}V' \, \big(\vec{\nabla}' \times \vec{M}(\vec{r}')\big) G(|\vec{r} - \vec{r}'|),
\end{equation}
where $G(|\vec{r} - \vec{r}'|) \equiv |\vec{r} - \vec{r}'|^{-1}$. In the following steps, we abbreviate the Green's function simply as $G$. We employ the product rule of vector analysis,
\begin{equation}
\vec{\nabla}' \times (G \vec{M}') = G (\vec{\nabla}' \times \vec{M}') + (\vec{\nabla}' G) \times \vec{M}' = G (\vec{\nabla}' \times \vec{M}') - \vec{M}' \times \vec{\nabla}' G,
\end{equation}
to isolate the integrand term:
\begin{equation}
G (\vec{\nabla}' \times \vec{M}') = \vec{\nabla}' \times (G \vec{M}') + \vec{M}' \times \vec{\nabla}' G.
\end{equation}
Integrating this expression over the entire source volume $V'$ yields:
\begin{equation}
\int \mathrm{d}V' \, G (\vec{\nabla}' \times \vec{M}') = \int \mathrm{d}V' \, \vec{\nabla}' \times (G \vec{M}') + \int \mathrm{d}V' \, \vec{M}' \times \vec{\nabla}' G.
\end{equation}
The first integral on the right-hand side can be transformed into a surface integral via the divergence theorem variant for curls:
\begin{equation}
\int \mathrm{d}V' \, \vec{\nabla}' \times (G \vec{M}') = \oint_{\partial V'} \mathrm{d}\vec{S}' \times (G \vec{M}').
\end{equation}
Since the magnetization field $\vec{M}(\vec{r}')$ is bounded and localized, the bounding surface $\partial V'$ can be pushed to infinity, causing this surface integral to vanish identically. Utilizing the relation $\vec{\nabla}' G = -\vec{\nabla} G$, we obtain:
\begin{equation}
\int \mathrm{d}V' \, G (\vec{\nabla}' \times \vec{M}') = \int \mathrm{d}V' \, \vec{M}' \times \vec{\nabla}' G = \int \mathrm{d}V' \, (\vec{\nabla} G) \times \vec{M}'.
\end{equation}
Because the unprimed operator $\vec{\nabla}$ acts exclusively on the observation coordinate $\vec{r}$, we can rewrite the integrand as $(\vec{\nabla} G) \times \vec{M}' = \vec{\nabla} \times (G \vec{M}')$. Factoring the curl out of the volume integration leads to:
\begin{equation}
\vec{B}(\vec{r}) = \frac{\mu_0}{4\pi}\,\vec{\nabla} \times \left( \vec{\nabla} \times \int \mathrm{d}V' \frac{\vec{M}(\vec{r}')}{|\vec{r}-\vec{r}'|} \right).
\end{equation}
Applying the standard double-curl vector identity $\vec{\nabla} \times (\vec{\nabla} \times \vec{X}) = \vec{\nabla}(\vec{\nabla} \cdot \vec{X}) - \nabla^2 \vec{X}$, the magnetic field splits into two components:
\begin{equation}
\vec{B}(\vec{r}) = \frac{\mu_0}{4\pi} \vec{\nabla} \left( \vec{\nabla} \cdot \int \mathrm{d}V' \frac{\vec{M}(\vec{r}')}{|\vec{r}-\vec{r}'|} \right) - \frac{\mu_0}{4\pi} \int \mathrm{d}V' \, \vec{M}(\vec{r}') \nabla^2 \left( \frac{1}{|\vec{r}-\vec{r}'|} \right).
\end{equation}
We now pull the unprimed differential operators inside the first integral, where they act on the kernel $|\vec{r}-\vec{r}'|^{-1}$ alone. The step that makes the result recognizable is that this kernel depends on $\vec{r}$ and $\vec{r}'$ only through their difference, so that each unprimed derivative may be traded for a primed one at the cost of a sign,
\begin{equation}
\partial_i \frac{1}{|\vec{r}-\vec{r}'|} = -\partial'_i \frac{1}{|\vec{r}-\vec{r}'|}
\quad \Longrightarrow \quad
\partial_i \partial_j \frac{1}{|\vec{r}-\vec{r}'|} = \partial'_i \partial'_j \frac{1}{|\vec{r}-\vec{r}'|},
\end{equation}
the two sign changes cancelling for the second derivatives that appear here. Setting $\vec{r} = \vec{0}$ and renaming the integration variable $\vec{r}' \rightarrow \vec{r}$ therefore converts the observation-point derivatives into exactly the source-point derivatives of the test integral \maeq{Testintegral}. Substituting the specialized magnetization profile $\vec{M}(\vec{r}) = \vec{\mu} f(\vec{r})$ and noting that $\nabla^2 |\vec{r}-\vec{r}'|^{-1} = -4\pi \delta(\vec{r}-\vec{r}')$, the second integral evaluates exactly via the Dirac delta distribution. Evaluating the total expression at the origin ($\vec{r} = \vec{0}$) yields:
\begin{equation}
\vec{B}(\vec{0}) = \frac{\mu_0}{4\pi} \int \mathrm{d}V f(\vec{r}) \vec{\nabla} \left( \vec{\nabla} \cdot \frac{\vec{\mu}}{|\vec{r}|} \right) + \mu_0 \vec{\mu} f(\vec{0}).
\end{equation}
This macroscopic expression matches the distribution-theoretic test integral over the field of a single localized point dipole, completing the proof. \hfill $\blacksquare$


\begin{thebibliography}{99}

\bibitem{Maxwell}
Maxwell J C 1873 \textit{A Treatise on Electricity and Magnetism} vol~2 (Oxford: Clarendon Press) pp~418--419

\bibitem{Fermi}
Fermi E 1930 \textit{Z. Phys.} \textbf{60} 320--333 \\
\url{https://link.springer.com/article/10.1007/BF01339933}

\bibitem{Jackson_40}
Jackson J D 1977 \textit{The nature of intrinsic magnetic dipole moments: what can the famous 21 cm astrophysical spectral line of atomic hydrogen tell us about the nature of magnetic dipoles?} \textit{CERN Yellow Reports: Monographs} \\
\url{https://doi.org/10.5170/CERN-1977-017}

\bibitem{GriffithsAJP}
Griffiths D J 1982 \textit{Am. J. Phys.} \textbf{50} 698--703

\bibitem{Jackson}
Jackson J D 1999 \textit{Classical Electrodynamics} 3rd edn (New York: John Wiley and Sons)

\bibitem{Griffiths}
Griffiths D J 2012 \textit{Introduction to Electrodynamics} 4th edn (Englewood Cliffs, NJ: Prentice-Hall)

\bibitem{Frahm}
Frahm C P 1983 Some novel delta function identities \textit{Am. J. Phys.} \textbf{51} 826--829 \\
\url{https://doi.org/10.1119/1.13127}

\bibitem{Muniz}
Mu\~niz Y, Fonseca A and Farina C 2019 Electrostatic and magnetostatic fields of point dipoles revisited \textit{Rev. Mex. F\'is. E} \textbf{65} 71--76

\bibitem{NIST}
Olver F W J, Olde Daalhuis A B, Lozier D W, Schneider B I, Boisvert R F, Clark C W, Miller B R, Saunders B V, Cohl H S and McClain M A (eds) 2026 \textit{NIST Digital Library of Mathematical Functions} Release 1.2.7 of 2026-06-15 \\
\url{https://dlmf.nist.gov/1.16}

\bibitem{Light}
Lighthill M J 1959 \textit{An Introduction to Fourier Analysis and Generalised Functions} (Cambridge: Cambridge University Press)

\bibitem{Halperin}
Halperin I 1952 \textit{Introduction to the Theory of Distributions} Canadian Mathematical Congress, Lecture Series, No. 1 (Toronto: University of Toronto Press) p~26

\bibitem{Wigner}
Wigner E 1927 \textit{Z. Phys.} \textbf{43} 624--652 \\
\url{https://doi.org/10.1007/BF01397327}; \\
van der Waerden B L 1974 \textit{Group Theory and Quantum Mechanics} (Berlin: Springer) \\
\url{https://doi.org/10.1007/978-3-642-65860-0}

\bibitem{DP}
Pescia D 2024 \textit{Lectures on Symmetry-Assisted Computation} (Singapore: World Scientific)

\bibitem{Yang}
Yang X L, Guo S H, Chan F T, Wong K W and Ching W Y 1991 \textit{Phys. Rev.} A \textbf{43} 1186--1196

\bibitem{Xie}
Xie W 2010 \textit{Nucl. Instrum. Methods Phys. Res.} B \textbf{268} 3321--3324

\bibitem{Bastard}
Bastard G 1981 \textit{Phys. Rev.} B \textbf{24} 4714--4722 \\
\url{https://journals.aps.org/prb/pdf/10.1103/PhysRevB.24.4714}; \\
Bastard G 1988 \textit{Wave Mechanics Applied to Semiconductor Heterostructures} (Les Ulis: Les \'Editions de Physique) ch~4

\bibitem{PRL}
Perraud S, Kanisawa K, Wang Z Z and Fujisawa T 2008 \textit{Phys. Rev. Lett.} \textbf{100} 056806 \\
\url{https://doi.org/10.1103/PhysRevLett.100.056806}

\bibitem{Plat}
Lunde A M and Platero G 2013 \textit{Phys. Rev.} B \textbf{88} 115411, and references therein \\
\url{https://doi.org/10.1103/PhysRevB.88.115411}

\bibitem{Coi}
Coish W A and Baugh J 2009 \textit{Phys. Status Solidi} B \textbf{246} 2203--2215 \\
\url{https://onlinelibrary.wiley.com/doi/epdf/10.1002/pssb.200945229}

\bibitem{Mer}
Merkulov I A, Efros Al L and Rosen M 2002 \textit{Phys. Rev.} B \textbf{65} 205309 \\
\url{https://doi.org/10.1103/PhysRevB.65.205309}


\bibitem{PIP}
Pescia D and Vindigni A 2024 On the spontaneous magnetization of two-dimensional ferromagnets \textit{Papers in Physics} \textbf{16} 160001 \\
\url{https://doi.org/10.4279/PIP.160001}

\bibitem{Yves}
Buess M, Acremann Y, Kashuba A, Back C H and Pescia D 2003 \textit{J. Phys.: Condens. Matter} \textbf{15} R1093--R1100


\end{thebibliography}
\end{document}